\newtheorem{theorem}{Theorem}
\newtheorem{lemma}{Lemma}
\newtheorem{assumption}{Assumption}
\newtheorem{proof}{Proof}
\newtheorem{definition}{Definition} 
\newtheorem{remark}{Remark}
\newtheorem{problem}{Identification Problem}
\begin{document}

\begin{frontmatter}

\title{A Two-stage Identification Method for Switched Linear Systems\thanksref{footnoteinfo}}

\thanks[footnoteinfo]{This paper was not presented at any IFAC 
meeting. 
}


\author[a]{Zheng Wenju}\ead{zhengwj18@mails.tsinghua.edu.cn},
\author[a]{Ye Hao}\ead{haoye@tsinghua.edu.cn}
\address[a]{Department of Automation, Tsinghua University, Beijing}

\begin{keyword}
    Switched linear system, System Identification, Sparsity Optimization, Persistent Excitation Condition
\end{keyword}

\begin{abstract}                          
    In this work, a new two-stage identification method based on dynamic programming and sparsity inducing is proposed for switched linear systems. 
    Our method achieves sparsity inducing in the identification of switched linear systems by the constrained switching mechanism, in contrast to previous optimization-based identification techniques that rely on the rigid data distribution assumption in the parameter space.
    The proposed mechanism assumes the existence of a minimal interval between adjacent switching instants.
    First, an efficient iterative dynamic programming approach is used to determine the switching instants and segments using the constrained switching mechanism. Then, each submodel is identified as a combinatorial $\ell_0$ optimization problem, and the true parameter for each submodel is determined by solving the problem. 
    The problem of combinatorial $\ell_0$ optimization is solved by relaxing it into a convex $\ell_1$-norm optimization problem.
    Furthermore, the unbiasedness of the switched linear system identification is discussed thoroughly with the constrained switching mechanism and a new persistent excitation condition is proposed. Simulation experiments are conducted to indicate that our algorithms exhibit strong robustness against noise.
\end{abstract}

\end{frontmatter}

\section{Introduction}
Identification of switched linear systems has attracted significant interest in recent years and has been widely applied to real-world problems \cite{massucci2022statistical}, such as computer vision, benchmark detection, system biology, electromechanical systems, etc. This interest stems from the fact that the complex inherently nonlinear systems and phenomenon can be effectively modeled as a combination of a finite number of linear systems \cite{garulli2012survey}. Generally speaking, a switched linear system which comprises a finite number of linear subsystems and a switched mechanism determining the active subsystem at each time instant, can be parameterized and formulated as \cite{bako2011identification}:
\begin{subequations} \label{s1}
    \begin{align}
        y_k & = \theta_{\lambda_t}^\top x_k+e_k\\
        x_k & = \left[y_{k-1},\cdots,y_{k-n_a},u\top_{k-1},\cdots,u\top_{k-n_b}\right]\top
    \end{align}
\end{subequations}
where $x_k\in \mathbb{R}^n$ is the regressor vector, $u_k \in \mathbb{R}$ and $y_k \in \mathbb{R}$ are designated as the input and the output of the system, $n_a$ and $n_b$ are the orders of the systems and $e_k \in \mathbb{R}$ is an unknown noise. $\lambda_k\in\{1,\cdots,S\}$ is the discrete state which indicates the active subsystem at time $k$ where $S$ is the number of the subsystems and $\theta_{\lambda_k}\in \mathbb{R}^n,n=n_a+n_b$ is the parameter vector corresponding to the active subsystem at time $k$. 

In this paper, we focus on the identification problem of switched linear system from a collection of time series input-output data. 
Besides the estimation of each subsystem, the identification of the switched linear system also need to confirm the discrete state which is generated by the switched mechanism and indicates the active subsystem at each time. Therefore, the identification of the switched linear system is essentially a mixed integer programming problem. Moreover, the identification of the number of the switching segments and submodels remains a critical and open issue \cite{massucci2022statistical}. 

\subsection{Prior works on the identification of switched linear systems}
There have been a good deal of literature on the identification of switched linear systems. 
In recent years, data-driven methods are widely used to the identification of switched linear systems, such as model selection techniques based on statistical learning \cite{massucci2020structural,massucci2022statistical}, regression trees \cite{de2021stability}, self-adaptative multi-kernel algorithm \cite{sellami2019self}, semi-supervised learning approach \cite{du2020semi} etc. However, these methods lack of convergence analysis. 
The algebraic recursive methods are proposed in literature \cite{vidal2003algebraic,vidal2004recursive,bako2008algebraic, goudjil2017recursive,bako2011recursive, ozay2015set} by transforming a switched linear system into a lifted linear model independent of the switching sequence. 
The optimization based methods address the identification problem of switched linear systems through formulations involving sum-of-norms regularization \cite{ohlsson2013identification}, mixed-integer programming \cite{roll2004identification}, sparse optimization \cite{bako2011identification}, and so forth. 
Sparse optimization has been widely used in inverse problems and signal and image processing problems \cite{bruckstein2009sparse} in recent years and the identification algorithm for switched linear systems based on sparse optimization is proposed in \cite{bako2011identification}, leveraging a specific combinatorial $\ell_0$-norm optimization problem, where the optimal solution is related to the subsystem parameters. This decouples the identification of system parameters and the determination of the discrete states in the switched linear system.
The unbiasedness of the identification in the algebraic and optimization-based methods mentioned above relies on the strict assumption of data distribution in the parameter space, as follows. 
\begin{assumption} \label{assumption_ori}
    (Assumption 6 in \cite{bako2011identification}) There is no data pair $(x_k,y_k)$ that fits two different submodels of the switched linear system \eqref{s1}, i.e., $y_k = \theta_i^\top x_k+e_k = \theta_j^\top x_k+e_k \Rightarrow i=j.$
\end{assumption}
Assumption \ref{assumption_ori} requires that all regressor vectors $\{x_k\}$ cannot lie on the intersection of the planes formed by the parameters of each subsystem. Based on Assumption \ref{assumption_ori}, the persistent excitation condition for the identification of the switched linear systems are proposed in \cite{mu2022persistence} in the absence of noise. 
However, Assumption \ref{assumption_ori} cannot be guaranteed in practical applications, especially in the presence of noise.

\subsection{Contributions of this paper}
In this paper, we address the identification of switched linear systems based on a constrained switching mechanism, which assumes the existence of a minimal interval between adjacent switching instants. Considering the identification of the switching instants, we propose a two-stage identification framework for switched linear systems, departing from the assumption of the distribution of data in parameter space, as described in Assumption \ref{assumption_ori}.

Firstly, inspired by the qualitative trend analysis \cite{zhou2017new} and curve fitting \cite{bellman1969curve,bellman1961approximation,shawe2004kernel}, an efficient iterative dynamic programming algorithm proposed in our previous work \cite{zheng2023identification} is utilized to identify the switching instants under the constrained switching mechanism. Meanwhile, the collected time series data can be partitioned into different time segments according to the estimated switching instants. Subsequently, in the second stage inspired by the sparse optimization identification algorithm \cite{bako2011identification}, the parameters of each submodel can be identified by solving a specific combinatorial $\ell_0$-norm optimization problem based on the segmented data. The combinatorial $\ell_0$-norm optimization problem is solved by relaxing it into a convex $\ell_1$-norm optimization problem.
Furthermore, the unbiasedness of the proposed two-stage algorithm is guaranteed under the assumption of the constrained switching mechanism and a new sufficient condition based on the constrained switched mechanism for the persistent excitation condition of switched linear systems is proposed as a comparison of that in review \cite{mu2022persistence} based on Assumption \ref{assumption_ori}.

\subsection{Outline of this paper}
This paper is organized as follows. In the next section, the identification problem of the switched linear system is presented, and the conceptions of the sparse optimization are briefly discussed. Later in Section III, a two-stage identification framework is proposed based on both dynamic programming and sparse optimization. The unbiasedness of the identification algorithms based on $\ell_0$-norm optimization and $\ell_1$-norm optimization are detailed discussed. In Section IV, a new persistent excitation condition for switched linear systems is proposed. Experimental results based on simulation and real data of high speed trains are presented in Section V. Finally, brief conclusions are drawn in Section VI.

\section{Preliminaries and problem formulation} \label{section2}
In this section, we introduce the basic conceptions and notations of the sparsity presentation, which will be extensively used throughout the rest of the paper. The identification problem of the switched linear system will be also formulated.
\subsection{Sparse Optimization}
Recent years, sparse optimization has been widely used in inverse problems and signal and image processing problems \cite{bruckstein2009sparse}. 
The general optimization problem with the equations of linear systems can be formulated as 
\begin{equation} \label{P_J}
    (P_J): \min_z J(z) ~ \text{subject to } Az=b
\end{equation}
where $A$ is always given as a full-rank matrix and the objective function $J(\cdot)$ is selected to guarantee a unique solution. In this section, we primarily introduce the most commonly used sparse optimization criterion, $\ell_0$ norm and $\ell_1$ norm.
The problem with $J(z)=\lVert z \rVert_0$, representing the nonzero entries of the vector $z$, is denoted as the $\ell_0$-norm optimization problem, $P_0$, as well as the $\ell_1$-norm optimization problem, $P_1$, with the objective function $J(z)=\lVert z \rVert_1$, representing the sum of the absolute values of the elements in the vector $z$.

There are some measures of how sparse the columns of the given matrix are and here we mainly introduce the $spark$, mutual coherence and $k$-genericity index \cite{bako2011identification,bruckstein2009sparse} for the convenience of the uniqueness of sparse solutions as well as the unbiasedness of the identification which will be discussed in the later Section III. The definition of $spark$ is written as: 
\begin{definition}
    The spark of a given matrix $A$ is the smallest number of columns from $A$ that are linearly dependent, written as $spark(A)$.
\end{definition}
In fact, the $spark$ of the given matrix $A$ can be formulated as \cite{bako2011identification}
\begin{equation}
    spark(A)=\min_{z\in \ker(A),z\neq0} \lVert z \rVert_0
\end{equation}
where $\ker(A)$ represents the kernel space or the null space of the matrix $A$ and $\lVert z \rVert_0$ represents the $\ell_0$-norm of the vector $z$ which is inherently the number of the nonzero entries of the vector $z$. There is a crucial theorem about $spark$ as follows:
\begin{theorem} \label{theorem_spark}
    (See Theorem 2 in \cite{bruckstein2009sparse}) If a system of linear equations $Az=b$ with full row rank matrix $A$ has a solution $z$ obeying $\lVert z \rVert_0 <  {spark(A)}/ {2}$, this solution is necessarily the sparsest possible. 
\end{theorem}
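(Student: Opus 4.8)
The plan is to give a direct proof of minimality by exploiting two facts already available in the excerpt. The first is the characterization stated just above the theorem, $spark(A)=\min_{w\in\ker(A),\,w\neq 0}\lVert w\rVert_0$, which says that every nonzero vector in the null space of $A$ has at least $spark(A)$ nonzero entries. The second, which I will invoke without fuss, is the subadditivity of the $\ell_0$-norm, $\lVert u+v\rVert_0\le\lVert u\rVert_0+\lVert v\rVert_0$, an immediate consequence of the fact that the support of a sum is contained in the union of the two supports. The strategy is to take the given solution $z$ together with an \emph{arbitrary} competing solution $z'\neq z$ of $Az'=b$, and to bound $\lVert z'\rVert_0$ from below, showing it strictly exceeds $\lVert z\rVert_0$.

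The core step examines the difference $w:=z-z'$. Since $Az=Az'=b$, we have $Aw=0$, so $w\in\ker(A)$, and $w\neq 0$ because the two solutions are distinct. The spark characterization then forces $\lVert w\rVert_0\ge spark(A)$, while subadditivity gives $\lVert w\rVert_0=\lVert z-z'\rVert_0\le\lVert z\rVert_0+\lVert z'\rVert_0$. Chaining these two inequalities yields
\[
spark(A)\le\lVert z\rVert_0+\lVert z'\rVert_0,\qquad\text{so}\qquad \lVert z'\rVert_0\ge spark(A)-\lVert z\rVert_0.
\]
Invoking the hypothesis $\lVert z\rVert_0<spark(A)/2$, the right-hand side is strictly larger than $spark(A)-spark(A)/2=spark(A)/2>\lVert z\rVert_0$, and therefore $\lVert z'\rVert_0>\lVert z\rVert_0$. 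Since $z'$ was an arbitrary solution different from $z$, this establishes that $z$ is the unique sparsest solution.

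I do not expect a genuine obstacle here, since the argument is short and self-contained; the only point worth flagging is the role of the constant $1/2$. It is precisely this factor that converts the lower bound $spark(A)-\lVert z\rVert_0$ into a quantity strictly exceeding $\lVert z\rVert_0$, so that a weaker assumption on $\lVert z\rVert_0$ would not suffice to guarantee strict optimality. The full-row-rank hypothesis on $A$ is used only to ensure that solutions exist and that $spark(A)$ is well defined; it plays no part in the comparison of the two supports itself.
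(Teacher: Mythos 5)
Your proof is correct and is precisely the standard argument for this result (the one given in the cited reference, which the paper itself defers to): form the difference of two solutions, note it lies in $\ker(A)$ so its $\ell_0$-norm is at least $spark(A)$, and combine this with subadditivity of $\lVert \cdot \rVert_0$ to force any competing solution to have strictly more than $spark(A)/2$ nonzeros. No gaps; your closing remark on the role of the factor $1/2$ is also accurate.
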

The $spark$ gives a simple criterion for uniqueness of sparse solutions and this criterion forms the basis of many results in the field of the sparse optimization. A simpler way to guarantee the uniqueness is to exploit the mutual coherence of the given matrix $A$, defined as follows.
\begin{definition}
    The mutual coherence of a given matrix $A$ is the largest absolute normalized inner product between different columns from $A$. Denoting the $k$th column in $A$ by $a_k$, the mutual coherence can be formulated as 
    \begin{equation} \label{mu}
        \mu(A) = \max_{{1\leq i,j \leq m},~{i\neq j}} \frac{|a_i^\top a_j|}{\|a_i\|_2\|a_j\|_2}.
    \end{equation}
\end{definition}
By introducing the definition of mutual coherence, an analogue of Theorem \ref{theorem_spark} can be formulated as follows:
\begin{theorem} \label{theorem_mu}
    (See Theorem 5 and Theorem 7 in \cite{bruckstein2009sparse}) If $Az=b$, where $A$ is a matrix with full row rank, and there exists a solution $z$ obeying 
    \begin{equation} \label{condition_mu}
        \|z\|_0< \frac{1}{2} (1+\frac{1}{\mu(A)})
    \end{equation}
    then this solution is necessarily the sparsest possible. Moreover, it represents the unique solution to the sparse problem \eqref{P_J} where $J(z) = \lVert z\rVert_0$ as well as when $J(z) = \lVert z\rVert_1$.
\end{theorem}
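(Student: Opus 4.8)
The plan is to split the statement into its two constituent claims—$\ell_0$ uniqueness and $\ell_1$ equivalence—and to route both through the mutual coherence. Throughout I would take the columns of $A$ to be normalized to unit $\ell_2$ norm: for the $\ell_0$ claim this is without loss of generality, since rescaling columns by a positive diagonal matrix preserves both $\ker(A)$ and the $\ell_0$ norm; for the $\ell_1$ claim it is the setting in which \eqref{mu} gives the sharp bound. Write $G = A^\top A$ for the Gram matrix, so that $G_{jj}=1$ and $|G_{ij}|\le \mu(A)$ for $i\ne j$ directly from the definition of mutual coherence.

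First I would establish the auxiliary bound $spark(A)\ge 1+1/\mu(A)$. Picking any $p$ columns and forming their $p\times p$ Gram submatrix, which has unit diagonal and off-diagonal entries bounded by $\mu(A)$, the condition $p<1+1/\mu(A)$ forces strict diagonal dominance, $\sum_{k\ne j}|G_{jk}|\le (p-1)\mu(A)<1=G_{jj}$; a Gershgorin-disc argument then shows the submatrix is nonsingular, so those columns are linearly independent. Hence no linearly dependent column set is smaller than $1+1/\mu(A)$, giving the bound. Combining it with the hypothesis yields $\|z\|_0<\tfrac12(1+1/\mu(A))\le \tfrac12\,spark(A)$, and Theorem \ref{theorem_spark} then shows $z$ is the sparsest solution; uniqueness for $P_0$ follows because any second solution would produce a nonzero kernel vector of $\ell_0$ norm below $spark(A)$, contradicting the definition of $spark(A)$.

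The harder part is the $\ell_1$ equivalence, which I would reduce to a coherence-based null space property. For any $h\in\ker(A)$ we have $Gh=A^\top(Ah)=0$, so $h_j=-\sum_{k\ne j}G_{jk}h_k$ and therefore $|h_j|\le \mu(A)(\|h\|_1-|h_j|)$, which rearranges to the uniform bound $|h_j|\le \frac{\mu(A)}{1+\mu(A)}\|h\|_1$ for every index $j$. Summing over the support $\mathcal{S}$ of the sparse solution $z$, with $s=|\mathcal{S}|=\|z\|_0$, gives $\|h_{\mathcal{S}}\|_1\le s\,\frac{\mu(A)}{1+\mu(A)}\|h\|_1$, and the hypothesis $s<\tfrac12(1+1/\mu(A))$ forces $\|h_{\mathcal{S}}\|_1<\tfrac12\|h\|_1$, i.e. $\|h_{\mathcal{S}}\|_1<\|h_{\mathcal{S}^c}\|_1$ for every nonzero $h\in\ker(A)$.

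Finally I would close with a standard perturbation estimate: any competing feasible point is $z+h$ with $h\in\ker(A)$, $h\ne 0$, and splitting the objective across $\mathcal{S}$ and $\mathcal{S}^c$ together with the reverse triangle inequality gives $\|z+h\|_1\ge \|z\|_1-\|h_{\mathcal{S}}\|_1+\|h_{\mathcal{S}^c}\|_1>\|z\|_1$, so $z$ is the strict, hence unique, minimizer of $P_1$. The main obstacle I anticipate is precisely this $\ell_1$ step—deriving the null space property from coherence and, above all, tracking strictness so as to obtain uniqueness rather than mere optimality—while the normalization caveat also deserves care, since unlike the $\ell_0$ objective the $\ell_1$ objective is not invariant under column rescaling.
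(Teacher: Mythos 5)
The paper never proves this theorem: it is imported wholesale from the cited survey (Theorems 5 and 7 in Bruckstein, Donoho and Elad), so there is no internal proof to compare yours against. Your reconstruction is correct and is essentially the standard argument from that literature: the Gershgorin/diagonal-dominance bound $spark(A)\geq 1+1/\mu(A)$ reduces the $\ell_0$ claims to Theorem \ref{theorem_spark} (any second $\ell_0$ minimizer would give a nonzero kernel vector of $\ell_0$-norm below $spark(A)$), and the coherence-based null space property $\lVert h_{\mathcal{S}}\rVert_1<\lVert h_{\mathcal{S}^c}\rVert_1$ for all nonzero $h\in\ker(A)$, combined with the reverse triangle inequality, gives strict $\ell_1$ optimality and hence uniqueness. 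Every step checks out, including the strictness bookkeeping needed for uniqueness.

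Your closing "normalization caveat" deserves to be promoted to a hypothesis, because without unit-norm columns the $\ell_1$ part of the statement is actually false, not merely harder to prove. Take $A$ with columns $e_1$, $e_2$, $c(e_1+e_2)$ for $0<c<1/2$ and $b=c(e_1+e_2)$: coherence is scale-invariant, so $\mu(A)=1/\sqrt{2}$ and the solution $z=(0,0,1)^\top$ satisfies $\lVert z\rVert_0=1<\tfrac12(1+\sqrt{2})$, yet the competing solution $(c,c,0)^\top$ has $\ell_1$-norm $2c<1=\lVert z\rVert_1$, so $z$ is the unique sparsest solution but not the $\ell_1$ minimizer. The statement as transcribed in the paper (arbitrary full-row-rank $A$) inherits this imprecision from the source, where column normalization is a standing convention; notably, the paper's own application implicitly compensates for it, since the coherence surrogate $\tau(X)$ in \eqref{tau} is exactly the coherence of the projection matrix after normalizing its columns, and the weighted $\ell_1$ objective with $W_X$ in \eqref{prob_one} plays the role of that missing rescaling. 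So your proof establishes the theorem in the only form in which it is true, and your flag on the $\ell_1$ scale-dependence identifies a real gap in the statement rather than in your argument.
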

The $k$-genericity index of the given matrix $A$, denoted as $v_k(A)$, is another way to measure the linear independence of the columns of $A$ and is first introduced in \cite{bako2011identification}.
\begin{definition} \label{ge_matrix}
    (See Definition 1 in \cite{bako2011identification}) For a given matrix $A\in \mathbb{R}^{n\times N},~n\leq N$ with the $k$th column denoted as $a_k$, the $k$-genericity index, $v_k(A)$ is the minimum integer $m$ such that any $n \times m$ submatrix of $A$ has rank $k$:
    \begin{equation} \label{k_index}
        \begin{aligned}
            v_k(A) = &\min \left\{  m: \forall (t_1,\cdots,t_m) \text{ with } t_i\neq t_j \text{ for } i\neq j,  \right.\\ 
            &  rank[a_{t_1},\cdots,a_{t_m}]=k \left. \right\}.
        \end{aligned}
    \end{equation}
    Moreover, if $k>rank(A)$, $v_k(A)=+\infty$ and if $k=0$, $v_0(A)=0$.
\end{definition}


\subsection{Identification Problem}
As mentioned in Section I, the identification of switched linear system \eqref{s1} involves not only identifying the parameters of each subsystem, $\{\theta_i\}_{i=1}^S$, but also determining the structure of the systems as well as the number of the subsystems, $S$, and the switching mechanism, which refers to the integer discrete states, $\{\lambda_k\}_{k=1}^N$. 
\begin{remark}
  The orders of the switched linear system \eqref{s1}, denoted as $n_a$ and $n_b$, are assumed to be known and equal for all subsystems which is adopted in most methods \cite{de2021stability,ozay2011sparsification,du2020semi,bako2011identification,bako2008algebraic}.
\end{remark}

Then, the identification problem of the switched linear system \eqref{s1} can be formulated as follows:
\begin{problem} \label{prob_1}
    Given a collection of input-output time series data $\mathcal{D}=\{(u_k,y_k)_{k=1}^N\}$ generated by the switched linear system \eqref{s1}, estimate the number of the subsystems $S$, the parameters of each subsystem $\{\theta_i\}_{i=1}^S$ and the discrete states $\{\lambda_k\}_{k=1}^N$.
\end{problem}
With a given nonnegative cost function, $l(\cdot)$, to measure the fitting errors of the identified model \eqref{s1}, the above identification problem can be written as the following mixed integer programming problem:
\begin{equation} \label{mixedipp}
    \begin{aligned}
    \min \limits_{S,\{\theta_i\}_{i=1}^S, \{\lambda_k\}_{k=1}^N} & \quad \sum_{k=1}^N l(y_k-\theta_{\lambda_k}^\top x_k) \\[1mm]
    \text{subj. to} &\quad \lambda_k\in \{1,\cdots,S\} ~\forall k=1,\cdots,N,  \\
    & ~S\in \mathbb{N},~\theta_i \in \mathbb{R}^n,~ \forall i=1,\cdots,S.
    \end{aligned}
\end{equation}
With the definition of the regressor vector $x_k$ in \eqref{s1}, we consider the vector of the bias:
\begin{equation} \label{phi}
    \phi(\theta)=\left[
        \begin{array}{ccc}
            \bar{x}_1^\top \bar{\theta} \\
            \vdots \\
            \bar{x}_N^\top \bar{\theta}
        \end{array}
    \right] = \bm{y}-X\top \theta - \bm{e}
\end{equation}
where $\bar{x}_k=[y_k, -x_k^\top]^\top, \bar{\theta}_k=[1,\theta^\top]^\top \in \mathbb{R}^{n+1}, k=1,\cdots,N$, $X=[x_1,\cdots,x_N]\in \mathbb{R}^{n\times N}$, $\bm{y}=[y_1,\cdots,y_N]^\top$ and $\bm{e}=[e_1,\cdots,e_N]^\top$. 

Inspired by \cite{bako2011identification}, the linear equation of the identification problem with an orthogonal projection matrix $A=I_N-X^\top(X\top X)^{-1}X$ ($I_N$ represents the $N$ identity matrix) can be formulated as follows:
\begin{equation} \label{linear_eq}
    Az=b
\end{equation}
where $\bm{z}=\phi(\theta)$ and $b=A\bm{y}$.
Considering that the matrix $A\in\mathbb{R}^{N\times N}$ can be replaced by the full row rank matrix $A_X\in \mathbb{R}^{(N-n)\times N}$ spanning the orthogonal complement of the column space of $X$.
Then the Identification Problem \ref{prob_1} can be regarded as the sparse optimization problem \eqref{P_J} with $J(z)=\sum_{k=1}^N l(y_k-\theta_{\lambda_k}^\top x_k)$ and $A_X z=A_X\bm{y}=b_X$.
For convenience, $\tau(X)$ is introduced to replace the mutual coherence $\mu(A_X)$ inspired by \cite{bako2011identification}, 
\begin{equation}\label{tau}
    \begin{aligned} 
        \tau(X) &= \max_{{1\leq t,k\leq N},~{t\neq k}} \frac{\lvert\Gamma_X(t,k)\rvert}{\sqrt{(1-\Gamma_X(t,t))(1-\Gamma_X(k,k))}} \\
        \Gamma_X &= X^\top (XX^\top)^{-1}X  
    \end{aligned}
\end{equation}
where $\Gamma_X(t,k)$ represents the value of the $(t, k)$-entry of $\Gamma_X$. According to the properties of idempotent matrix $A$, it follows $\mu(A_X) = \tau(X)$.

\subsection{Persistence of excitation problem}
The convergence of the identification algorithm always requires the input signal or the regressor vector constructed from input-output data to be persistently exciting \cite{narendra1987persistent}. 
The persistence of excitation is given as follows.
\begin{definition} \label{defpe}
    $\{x_k\}_{k=1}^N$ is a persistently exciting sequence of order $o_e \geq \dim(x_k)=n$ if there exist $\rho_1,\rho_2>0$ such that
    \begin{equation} \label{pe}
        \rho_1 I_n \leq \frac{1}{o_e}\sum_{k=0}^{o_e-1} x_{k_0+k}^\top x_{k_0+k} \leq \rho_2 I_{n}, \forall k_0\in \mathbb{N}
    \end{equation} 
    where $I_n$ denotes the $n$ identity matrix.
\end{definition}

In the absence of noise, $e_k\equiv 0$, the persistence of excitation problem of the switched linear system \eqref{s1} becomes whether the following equation has a unique solution for $\{\theta_i,\xi_{i,k},k=1,\cdots,N\}_{i=1}^S$ \cite{mu2022persistence},
\begin{equation} \label{pe_prob}
    \begin{aligned}
    &\sum_{k=1}^{N}\sum_{i=1}^{S} \xi_{i,k}(y_k-x_k^\top\theta_i)^2=0,\\
    \text{subject to} ~& \xi_{i,k} = I\{\lambda_k=i\} \\
    &\sum_{i=1}^{S}\xi_{i,k} = 1
    \end{aligned}
\end{equation}
where $I\{\cdot\}$ represents the indicator function.
Then for switched linear system \eqref{s1}, a PE condition was proposed based on Assumption \ref{assumption_ori} as follows \cite{mu2022persistence}.
\begin{assumption} \label{pe_ori}
    (See Assumption 1 in \cite{mu2022persistence}) For switched linear systems \eqref{s1}:
    \begin{enumerate}
        \item $\theta_i \neq \theta_j,~i\neq j,~\forall i,j \in \{1,\cdots,S\}$
        \item Assumption \ref{assumption_ori} holds.
        \item There exists an ordered set sequence $(\kappa_{p_1},\cdots,\kappa_{p_S})$ in the order of $(p_1,\cdots,p_S)$ with $(p_1,\cdots,p_S)$ being a permutation of $(1,\cdots,S)$ such that the following statements hold sequentially:
        \begin{enumerate}
            \item For any non-overlapping partition having the form $\kappa_{p_1}=\cup_{i=1}^S \kappa_{p_1}^{(S,i)}$, there exists some subset $\kappa_{p_1}^{s,i_1}$ with $i_1\in\{1,\cdots,S\}$ such that $\sum_{k\in\kappa_{p_1}^{S,i_1}}x_kx_k^\top$ is nonsingular.
            \item For any non-overlapping partition of $\kappa_{p_r}$ having the following form with $r\in\{2,\cdots,S\}$
            \begin{equation}
                \{\kappa_{p_r}^{(S-r+1,i)},i\in\{1,\cdots,S\}\backslash\{i_1,\cdots,i_{r_1}\}\},
            \end{equation}
            there exists $i_r \in \{1,\cdots,S\}\backslash\{i_1,\cdots,i_{r_1}\}$ such that $\sum_{k\in\kappa_{p_r}^{S-r+1,i_r}}x_kx_k^\top$ is nonsingular.
            \item $\sum_{k\in\kappa_{p_S}}x_kx_k^\top$ is nonsingular.
        \end{enumerate}
    \end{enumerate}
\end{assumption}

\section{Identification based on dynamic programming and sparsity inducing} 
In this section, a two-stage identification method for switched linear systems \eqref{s1} is proposed. In the first stage, the switching instants are identified based on an iterative dynamic programming algorithm in our prior work \cite{zheng2023identification} and the computational complexity of the algorithm has been further simplified. Under the segmented data which are divided according to the estimated switching instants, the parameters of each subsystem are determined based on the sparsity inducing methods.

\subsection{Constrained switching mechanism and problem formulation}
We begin with the limitation of Assumption \ref{assumption_ori} for the identification of switched linear system \eqref{s1}.

Actually, if we use fitting error to distinguish the mode of the data at that moment $(x_k,y_k)$ as mentioned in most literature, it is also known that $[y_k,-x_k^\top]^\top$ belongs to the kernel space (or the null space) of the parameter vector $[1,\theta_i^\top]$ which is written as, $[y_k,-x_k^\top]^\top \in \ker([-1,\theta_i^\top])$. Obviously, the data that belongs to the intersection of the two kernel spaces, $\ker([1,\theta_i^\top])\cap\ker([1,\theta_j^\top]),i\neq j$, cannot be distinguished whether it belongs to the $i$th or $j$th submodel. In the absence of noise, partitioning the data sample $(x_k,y_k)$ into the $i$th submodel or $j$th submodel does not affect the identification of the parameters $\{\theta_i\}_{i=1}^S$. While in the presence of noise, we cannot distinguish whether the distance from a sample point, $(x_k,y_k)$, to the kernel space, $\ker([1,\theta_i^\top])$, is influenced by noise or not and this leads to the inevitable identification errors and the influence on the unbiasedness of the identification. We make a graphical illustration for the result of the mismatch estimated partition in Fig. \ref{fig:mismatch}. 
\begin{figure} 
    \centering
    \includegraphics[width=\linewidth]{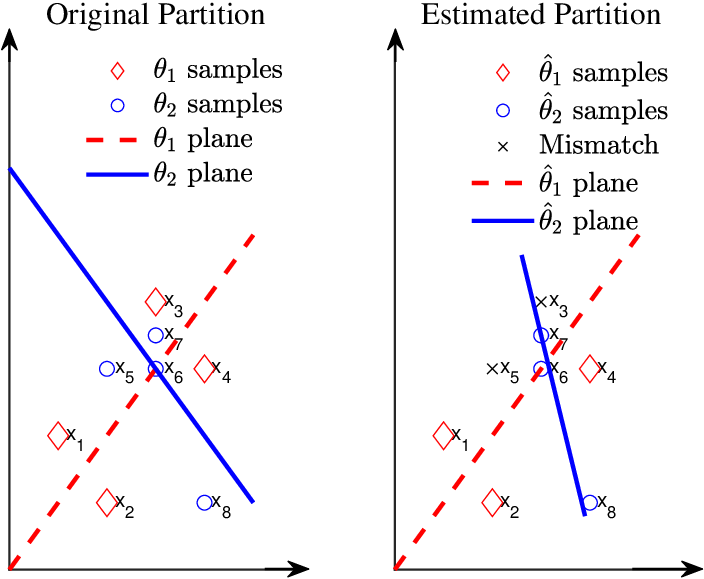}  
    \caption{A noisy example with data samples $\{x_k\}_{k=1}^8$ for the illustration of the mismatch in data partition. The original data partition corresponding to the true switching instants, $\{\lambda_k\}_{k=1}^8$, is shown on the left graph while the estimated partition with mismatch samples, $\{\hat{\lambda}_k\}_{k=1}^8$, is on the right. 
    } \label{fig:mismatch}
\end{figure}
As depicted in Fig. \ref{fig:mismatch}, $\{x_k\}_{k=1}^4$ and $\{x_k\}_{k=5}^8$ belong to different subsystems. If we only use the fitting errors to identify the switched linear system with $\{x_k\}_{k=1}^8$, the estimated partition with the least fitting error can be shown on the right graph in Fig. \ref{fig:mismatch}, with $\{x_1,x_2,x_4,x_5\} $ and $\{x_3,x_6,x_7,x_8\}$, thereby introducing bias in the identification of the parameter $\hat{\theta}_2$.  

However, by incorporating prior knowledge about the switching instants in time, specifically that mode switching only occurs once, the identification result is clearly formulated as the true partition with two segments: ${x_k}{k=1}^4$ and ${x_k}{k=5}^8$.
Therefore, methods that only rely on fitting errors for data partitioning depend on the noise distribution, while method leveraging the prior knowledge about switching instants in time may enhance the robustness of the identification.
This explains why the aforementioned methods based on Assumption \ref{assumption_ori} exhibit poor robustness against noise in practical applications.

To introduce the prior knowledge about switching instants in time, we need the segmentation of time series data before the identification for switched linear systems.
Then we propose the constrained switching mechanism which assumes the existence of a minimal interval between adjacent switching instants and stated as follows:
\begin{assumption} \label{assumption_new}
    Assumed that there exist $M$ different switching instants, $\{s_{m}\}_{m=1}^{M+1}, s_1=1$ in the collected time series data. The configuration of adjacent switching instants satisfies
    \begin{equation} \label{delta_s}
        s_{m+1}-s_{m} \geq \Delta s_{\min},
    \end{equation}
    where $\Delta s_{\min}$ is an integer and regulates the minimum distance between adjacent segmentation points.
\end{assumption}
Here we make a comment on Assumption \ref{assumption_new}, which not only avoid the nontrivial solutions in our proposed identification methods but also holds significant practical implications. 
This is because in practical applications, such as industrial processes, there exists a time interval between the switching of system modes due to the change of the environment or system operating points. 
This is because in industrial processes, the switching instants of systems always stem from changes in the environment and adjustments in operating points. Therefore, system switches cannot occur frequently.
This accounts for the prior knowledge of the constrained switching mechanism as described in Assumption \ref{assumption_new} on the timescale.

Next, we will introduce our two-stage identification framework for switched linear systems based on Assumption \ref{assumption_new}.
The identification problem \eqref{mixedipp} of the switched linear system \eqref{s1} are 
transformed into the identification of the switching instants $\{s_{m}\}_{m=1}^{M+1}$ and the identification of the parameters $\{\theta_i\}_{i=1}^S, \{\lambda_k\}_{k=1}^S$.
The qualitative trend analysis \cite{zhou2017new} and curve fitting \cite{bellman1969curve,bellman1961approximation,shawe2004kernel} are introduced for the identification of switching instants $\{s_{m}\}_{m=1}^{M+1}$, since the parameter vectors $\theta_{\lambda_k}$ can be considered as the trend of the segment, $\beta_m$. Then the identification problem for the switching instants are formulated as, 
\begin{equation} \label{problem_s1}
    \begin{aligned}
    \min \limits_{M, \{s_{m}\}_{m=1}^{M},\{\beta_m\}_{m=1}^M} & \quad \sum_{m=1}^{M+1}\sum_{k=s_m}^{s_{m+1}-1} l(y_k-\beta_{m}^\top x_k) \\[1mm]
    \text{subj. to} &~M\in \mathbb{N},~ s_1=1,~ s_{M+1}=N+1  \\
    & ~\beta_m \in \mathbb{R}^n,~ \forall m=1,\cdots,M.
    \end{aligned}
\end{equation}
If the data segmentation is sufficiently precise, we can make the following assumption.
\begin{assumption} \label{assumption_seg}
    Each segment of data belongs to the same subsystem. That is, $\forall m = 1,\cdots,M$, we have
    \begin{equation} \label{def_Lambda}
        \lambda_k=\Lambda_m,~k=s_{m},s_{m}+1,\cdots,s_{m+1}-1
    \end{equation}
    where $\Lambda_m\in \mathbb{N}$ indicates the order of the active subsystem among the time instants $s_m$ and $s_{m+1}$.
\end{assumption}
\begin{remark} \label{remark_seg}
    Assumption \ref{assumption_seg} is based on the segmented data by introducing the prior knowledge about the switching instants in time, and it degenerates into Assumption \ref{assumption_ori}, when there is only one data sample in each segment, denoted as $M=N$, $s_m=m$, $\forall m=1,\cdots,M$ and $\Delta{s_{min}}=1$.
\end{remark}


Considering that data is segmented by $\{s_{m}\}_{m=1}^{M+1}$ and Assumption \ref{assumption_seg} holds, the identification of parameters $\{\theta_i\}_{i=1}^S, \{\lambda_k\}_{k=1}^S$ can be formulated as
\begin{equation} \label{problem_s2}
    \begin{aligned}
    \min \limits_{S, \{\theta_i\}_{i=1}^S, \{\Lambda_m\}_{m=1}^M} & \quad \sum_{m=1}^{M}\sum_{k=s_m}^{s_{m+1}-1} l(y_k-\theta_{\Lambda_m}^\top x_k) \\[1mm]
    \text{subj. to} & ~S\in \mathbb{N},~\theta_i \in \mathbb{R}^n,~ \forall i=1,\cdots,S,\\
    &~ \quad \Lambda_m\in \{1,\cdots,S\} ~\forall m=1,\cdots,M.  \\
    \end{aligned}
\end{equation}
Next, we will introduce the two-stage identification method to address the identification problems of switched linear system \eqref{s1}, which is divided into the optimization problems \eqref{problem_s1} and \eqref{problem_s2}.


\subsection{Identification of the switching instants by dynamic programming}
Before introducing the algorithm, the expressions of the data segmentation are defined for the sake of convenience.
Suppose that the input-output time series data $\mathcal{T}=\{x_k\}_{k=1}^N,~\bm{y}=\{y_1,\cdots,y_N\}$ are collected by the switched linear system \eqref{s1} and there exist $M$ switching instants. Identification of the switching instants is to divide the time series data into $M$ non-overlapping data $\mathcal{T}=\cup_{m=1}^M \mathcal{T}_{m},~\bm{y}=\cup_{m=1}^M \bm{y}^m, ~\bm{e}=\cup_{m=1}^M \bm{e}^m$, where each segment is expressed as:
\begin{equation} \label{segment1}
    \begin{aligned}
        & \mathcal{T}_m = \left \{ x_{s_m}, x_{s_m +1}, \cdots, x_{s_{m+1} -1} \right \}\\
        & \bm{y}^m = \left \{ y_{s_m}, y_{s_m +1}, \cdots, y_{s_{m+1} -1} \right \} \\
        & \bm{e}^m = \left \{ e_{s_m}, e_{s_m +1}, \cdots, e_{s_{m+1} -1} \right \}.
    \end{aligned}
\end{equation}
Here, a grid of segmentation points $\{ s_m \}_{m=1}^{M+1}$ is introduced that satisfy $1=s_1 < s_2< \cdots < s_{M} <  s_{M+1} = N+1$, where $s_m$ indicates the index of the first data sample in $\mathcal{T}_m$, $m=1,\cdots,M$. 
For convenience, the restriction of entire time series $\mathcal{T}=\{x_k,~k=1,\cdots,N\}$ and $\bm{y}=\{y_k,~k=1,\cdots,N\}$ to a specific interval $[k,k'],k<k'$ is denoted as:
\begin{equation} \label{segment2}
    \begin{aligned}
        & \bm{y}(k:k') = [y_{k},y_{k+1},\cdots,y_{k'}]^\top\\
        &\mathcal{T}({k:k'}) =  [x_{k},x_{k+1},\cdots,x_{k'} ].\\
    \end{aligned}
\end{equation}

Then with the monotonic increasing $\ell_2$-norm function $l(\cdot)=\lVert \cdot \rVert_2$, the objective function of the mixed integer programming problem \eqref{problem_s1} with known number of switching instants $M$ can be rewritten as 
\begin{equation} \label{jm}
    \begin{aligned}
        \mathcal{J}(M) = \min_{ \{s_{m+1},\beta_m\}_{m=1}^{M}} \sum_{m=1}^M \lVert \bm{y}(s_m:s_{m+1})- \mathcal{T}_m^\top \beta_m \rVert_2^2 
    \end{aligned}
\end{equation}
where $\{s_j\}_{j=1}^{M+1}, s_1=1$ represent the switching instants. According to the least squares algorithm for the parameter estimation of linear systems, $\beta_m$ can be constructed from the segmented data $\mathcal{T}_m$ and $\bm{y}(s_m:s_{m+1})$, denoted as $\beta_m = (\mathcal{T}_m \mathcal{T}_m^\top)^{-1}\mathcal{T}_m \bm{y}(s_m:s_{m+1})$. Therefore, the problem \eqref{jm} will be addressed with the decisive parameters $\{s_{m}\}_{m=1}^M$ of $\mathcal{J}(M)$ by an iterative dynamic programming method in \cite{zheng2023identification}.

For a subproblem involving optimally dividing a shorter data trajectory $\mathcal{T}({1:k})$, where $k \leq N$, into $m$ segments, the total fitting error and the optimal segmentation choice are defined as follows:
\begin{equation} \label{cij}
    \begin{aligned}
        C_{m}(k) &= \min_{s_1, \cdots, s_{m}} \sum_{i=1}^{m} \lVert \bm{y}(s_i:s_{i+1})- \mathcal{T}_i^\top \beta_i \rVert_2^2 , \\
        \mathcal{K}_{m}(k)& = \mathop{\arg\min}_{s_1, \cdots, s_{m}} \sum_{i=1}^{m} \lVert \bm{y}(s_i:s_{i+1})- \mathcal{T}_i^\top \beta_i \rVert_2^2, \\
        & = \{\hat{s}_1^{m}(k), \cdots, \hat{s}_{m}^{m}(k)\}
    \end{aligned} 
\end{equation}
where $\hat{s}_1^{m}(k)=s_1=1$ is enforced and $s_{m+1}=k$ is assumed to simplify notations. Besides, we define $P_{m}(k)=\hat{s}_{m}^{m}(k)$. Obviously, the optimal value of \eqref{jm} is given by $\mathcal{J}(M)=C_{M}(N)$ and the optimal choice of segmentation can be readily recovered from $P_{M}(N)$.
In virtue of dynamic programming, $C_{m}(k)$ and $P_{m}(k)$ can be conveniently calculated based on the following recursions \cite{zheng2023identification} with Assumption \ref{assumption_new}.
\begin{itemize}
    \item Step 1: Initialize $C_{1}(k),~k=1,\cdots,N$:
    \begin{equation} \label{a1}
        C_{1}(k) = 0,~ P_{1}(k) =  k.
    \end{equation} 
    \item Step 2: Calculate $C_{m}(k)$ and $P_{m}(k)$ for $m=2,\cdots,M$ by following the Bellman principle of optimality: 
    \begin{equation} \label{a2}
        \begin{aligned}
            C_{m}(k) =& \min_{(m-1)\Delta s_{\min}\leq j \leq k}  C_{m-1}(j)+ d_{m,k}(j), \\
            P_{m}(k) =& \mathop{\arg\min}_{(m-1)\Delta s_{\min}\leq j \leq k} C_{m-1}(j)+ d_{m,k}(j), \\
        \end{aligned} 
    \end{equation}
    where $\forall k = (m-1)\Delta s_{\min},(m-1)\Delta s_{\min}+1,\cdots,N$, $d_{m,k}(j) = \min_{\beta \in \mathbb{R}^{n}} \lVert \bm{y}(j+1:k)- \mathcal{T}({ j+1:k})^\top \beta \rVert_2^2$.
    \item Step 3: Recover $\{\hat{s}_m\}_{m=1}^{M+1}$ successively with $\hat{s}_{M+1} = N$:
    \begin{equation} \label{a3}
        \begin{aligned}
            &\hat{s}_M = P_{M}(\hat{s}_{M+1}),\dots,~\hat{s}_2=P_{2}(\hat{s}_3),~\hat{s}_1 = 1.
        \end{aligned}
    \end{equation}
\end{itemize}
During the implementation of the identification process for the switching instants, we utilize the following criterion to optimally determine the number of segments $M$ \cite{zaliapin2004multiscale,zheng2023identification}:
\begin{equation} \label{jmc}
    M^*=\arg\min_M \frac{\log(\mathcal{J}(1)/\mathcal{J}(M))}{M-1}.
\end{equation}
\begin{remark} \label{RLS}
    Here we make a comment on the calculation of the linear estimation problem $d_{m,k}(j)$. Although the problem can be directly solved using the batch least-squares algorithm, the recursive least-squares algorithm \cite{benesty2011recursive} is introduced to significantly reduce the complexity of the algorithm. Then, $d_{m,k}(j+1)$ can be derived in the following recursive equations for $j=(m-1)\Delta s_{min},\cdots,k$:
    \begin{equation}
        \left\{ 
        \begin{aligned}
            d_{m,k}(j+1) & = d_{m,k}(j) + \frac{[z_{j+1}-x_{j+1}^\top \bm{\hat{\beta}}_j^m]^2}{1+x_{j+1}^TQ_j^m x_{j+1}} \\
            K_{j+1}^m &=  Q_{j}^m x_{j+1}[1+x_{j+1}^\top Q_{j}^m x_{j+1}]^{-1} \\
            Q_{j+1}^m &= [I-K_{j}^m x_{j+1}^\top]^{-1} Q_{j}^m \\
            \bm{\hat{\beta}}_{j+1}^m &= \bm{\hat{\beta}}_{j}^m + K_{j}^m[y_{j+1}-x_{j+1}^\top \bm{\hat{\beta}}_{j}^m] \\
        \end{aligned}\right. .
    \end{equation}
\end{remark}
The procedure for implementing the identification of switching instants by dynamic programming can be outlined as Algorithm \ref{algo:si}.
\begin{algorithm}[ht]  
    \caption{Identification of Switching Instants} \label{algo:si}
    \hspace*{0.02in} {\bf Require:} \\ 
    \hspace*{0.02in} Time series dataset $\mathcal{T}=\{x_{1},x_{2},\cdots,x_{N}\},~\bm{y}=\{y_1,\cdots,y_N\}$ and hyperparameters $\Delta s_{\min}$.
    \begin{algorithmic}[1]
    \State Calculate $\{C_{m}(n)\}_{m=1}^M$ and $\{P_{m}(n)\}_{m=1}^{M}$ as per \eqref{a1} and \eqref{a2}.
    \State Decide $M$ using the criterion \eqref{jmc} and derive $\{\hat{s}_m\}_{m=1}^{M}$ with \eqref{a3}. 
    \State Derive $M$ segments $\mathcal{T}=\cup_{m=1}^{M} \mathcal{T}_{m}$ based on the switching instants $\{\hat{s}_m\}_{m=1}^{M}$.
    \State \Return Segmented data $\mathcal{T}=\cup_{m=1}^M \mathcal{T}_{m},~\bm{y}=\cup_{m=1}^M \bm{y}^m$.
    \end{algorithmic}
\end{algorithm}

\subsection{Identification for the switched linear system by sparsity inducing}
In the previous section, the switching instants of system \eqref{s1} are identified, and segmented data are derived, denoted as $\bar{D}=\{\bar{D}_1,\cdots,\bar{D}_M\},~\bar{D}_m=[\bm{y}^m,-\mathcal{T}_{m}^\top]\in \mathbb{R}^{p_m\times (n+1)}, m=1,\cdots,M$ where $p_m$ represents the number of the samples in $\bar{D}_m$. The following assumption can be clearly made throughout the paper according to the identification of the switching instants.
Based on Assumption \ref{assumption_seg}, the identification of the discrete states $\{\lambda_k\}_{k=1}^N$ in the switched linear system \eqref{s1} can be transformed into the identification of the discrete states for the segmented data, denoted as $\{\Lambda_m\}_{m=1}^M$, where $\Lambda_m\in\{1,\cdots,S\}$ indicates the order of the active subsystem among the time instants corresponding to $\bar{D}_m$.
Given that the number of data samples in the data segment $\mathcal{T}_m$, denoted as $p_m$, exceeds the minimal time interval $v_n(X)$, and considering $rank(\mathcal{T}_m)\geq n=\dim(\theta)$ as defined in \eqref{k_index}, it follows that the linear system of equations $\bm{y}^m = \mathcal{T}_m^\top \theta +\bm{e}^m$ possesses a unique solution for $\theta$. Then, if Assumption \ref{assumption_new} with $\Delta s_{min}>v_n(X)$ and Assumption \ref{assumption_seg} holds, an analogue of Assumption \ref{assumption_ori} can be formulated as follows.
\begin{assumption} \label{assumption_new2}
    The data samples of each segment $(\mathcal{T}_m,\bm{y}^m)$ belong to the same subsystem $\Lambda_m$ after the identification for the switching instants of the system \eqref{s1}.
    There is no segment $(\mathcal{T}_m,\bm{y}^m)$ that fits two different submodels of the switched linear system \eqref{s1}, i.e., $\bm{y}^m = \mathcal{T}_m^\top \theta_i+\bm{e}^m = \mathcal{T}_m^\top \theta_j +\bm{e}^m \Rightarrow i=j.$
\end{assumption}

Based on the segmented data $\bar{D}=\{\bar{D}_1,\cdots,\bar{D}_M\}$, the identification problem \eqref{problem_s2} of the switched linear system \eqref{s1} can be formulated as an analogue of the sparse optimization \eqref{P_J} with the regressor matrix $X$, the output vector $\bm{y}$ and $J(z)=l(z)=\lVert z\rVert_0$:
\begin{equation} \label{prob_zero_ori}
    \begin{aligned}
        \min_{z\in\mathbb{R}^{N}} &~ \sum_{m=1}^{M}\|\bar{D}_m\bar{\theta} \|_0=\sum_{m=1}^{M} \|z^m\|_0 \\
        \text{subject to } &~Az=b \\
    \end{aligned}
\end{equation} 
where $\bar{\theta}=[1,\theta^\top]^\top$, $z=[\bar{D}^{\top}_1,\cdots,\bar{D}^{\top}_M]^\top\bar{\theta} $, $z^m=\bar{D}_m\bar{\theta}$, $A=I_N-X^\top(X\top X)^{-1}X,~ b = A\bm{y}$.
However, the optimization problem \eqref{prob_zero_ori} does not take into account the crucial prior knowledge summarized as Assumption \ref{assumption_new2}. To incorporate this knowledge and transform the parameters into $\theta$, the identification problem can be formulated as
\begin{equation} \label{prob_zero}
    \begin{aligned}
        \min_{\theta\in\mathbb{R}^{n}} &~ \sum_{m=1}^{M}p_m I\{\|\bar{D}_m\bar{\theta} \|_0 \geq p_m\} \\
        \text{subject to } &~Az=b \\
    \end{aligned}
\end{equation} 
where $I\{\cdot\}$ is the indicator function and $p_m$ represents the number of the samples in $\bar{D}_m$.

If Assumption \ref{assumption_new2} holds and all the subsystems are sufficient excited within the input-output time series data $\mathcal{D}$, the solution to the sparse problem \eqref{prob_zero} with $J(z)=\lVert z\rVert_0$ is a parameter vector representing one of the subsystem of the system \eqref{s1} as the following lemma.
\begin{lemma} \label{lemma_zero}
    Assume that the number of the data which belong to the $i$th submodel $N_i>S v_n(X),~\forall i=1,\cdots,S$, where $S$ is the number of the submodels. If Assumption \ref{assumption_new2}  holds, then the solution to the problem \eqref{prob_zero} $\theta_{i_0}$ is the parameter of the $i_0$th subsystem that has generated the most number of data.
\end{lemma}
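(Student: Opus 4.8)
The plan is to reduce the minimization of the segment-weighted $\ell_0$ objective in \eqref{prob_zero} to a counting argument on how many data points a candidate parameter can exactly reproduce, and then to invoke the genericity index $v_n(X)$ to show that only a true subsystem parameter can reproduce enough data. I work in the identifiability (noiseless) setting, so that the $k$-th entry of $z^m = \bar{D}_m\bar{\theta}$ equals $y_k - x_k^\top\theta = x_k^\top(\theta_{\Lambda_m}-\theta)$, which vanishes exactly when $\theta$ reproduces the sample $(x_k,y_k)$. With this reading, a segment contributes $0$ to the objective precisely when $\theta$ reproduces at least one of its samples, and contributes its full weight $p_m$ otherwise, so that the objective is a monotone function of the set of ``unexplained'' segments.

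First I would establish the behaviour at a genuine subsystem parameter. If $\theta=\theta_i$, then by Assumption \ref{assumption_new2} every segment with $\Lambda_m=i$ satisfies $z^m=0$ and thus contributes $0$; hence the objective at $\theta_i$ is controlled by the data not belonging to subsystem $i$. Taking $\theta=\theta_{i_0}$ with $N_{i_0}=\max_i N_i$ therefore yields the smallest such bound among the $S$ subsystem parameters, which is the candidate optimum I want to certify.

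The key step is to bound, for a parameter that is \emph{not} a subsystem parameter, how much data it can reproduce. Using Definition \ref{ge_matrix}, any $v_n(X)$ columns of $X$ have rank $n$, so any family of regressors lying on a single hyperplane (rank at most $n-1$) has cardinality at most $v_n(X)-1$. Applying this to the hyperplane $\{x:x^\top(\theta-\theta_i)=0\}$ shows that whenever $\theta\neq\theta_i$, the parameter $\theta$ reproduces at most $v_n(X)-1$ of the samples generated by subsystem $i$; summing over the $S$ subsystems, a $\theta$ differing from every $\theta_i$ reproduces fewer than $S\,v_n(X)$ samples in total. Since $\Delta s_{\min}>v_n(X)$, each segment carries $p_m>v_n(X)-1$ samples, so no parameter that is wrong for a segment can match the whole segment, which is what ties the per-sample count to the per-segment threshold $\|z^m\|_0\ge p_m$ in \eqref{prob_zero}.

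Finally I would combine these estimates: because $N_{i_0}>S\,v_n(X)$, the parameter $\theta_{i_0}$ reproduces at least $N_{i_0}>S\,v_n(X)$ samples, strictly more than any $\theta$ that is not a subsystem parameter and (using $N_{i_0}\ge N_j$) at least as much as any minority parameter $\theta_j$, with strict inequality unless $j=i_0$; translating ``reproduces more data'' into ``achieves a strictly smaller objective'' forces the minimizer to be $\theta_{i_0}$. The step I expect to be the main obstacle is exactly this last bridge: the genericity index controls a \emph{per-sample} count, whereas \eqref{prob_zero} charges \emph{per segment}, so I would need to argue carefully that an adversarial $\theta$ cannot accumulate enough zero-weight segments by reproducing a single sample in each of many large segments; making this rigorous is where the joint use of $N_i>S\,v_n(X)$ and $\Delta s_{\min}>v_n(X)$ must be pushed to its full strength.
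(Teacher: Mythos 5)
Your overall strategy---compare objective values against a true parameter and use the genericity index $v_n(X)$ to bound how many samples a wrong parameter can reproduce---is the same one the paper uses, but the step you yourself flag as ``the main obstacle'' is a genuine gap, and under your reading of the objective it cannot be closed at all. You parse the indicator in \eqref{prob_zero} literally, so that a segment contributes $0$ as soon as $\theta$ reproduces a \emph{single} one of its samples. Under that reading the claimed statement is false, not merely hard to prove: the hypotheses $N_i>Sv_n(X)$ and Assumption \ref{assumption_new2} do not exclude data in which some hyperplane $\{x:x^\top d=0\}$ contains one regressor from each of many large segments (such a configuration simply inflates $v_n(X)$ to a finite value that the hypothesis still tolerates). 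Then the parameter $\theta^{*}=\theta_1+d\notin\{\theta_1,\dots,\theta_S\}$ zeroes out, at cost one sample per segment, a collection of segments whose total weight $\sum_m p_m$ can strictly exceed $N_{i_0}$, so $\theta^{*}$ beats $\theta_{i_0}$. This is exactly the failure you anticipated: the genericity index bounds a \emph{per-sample} count (at most $v_n(X)-1$ samples of subsystem $i$ on any hyperplane), and no amount of care converts that into a bound on the \emph{per-segment} credit when one sample suffices to discharge a whole segment.

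The paper's proof closes the argument by working with the other reading, the one consistent with Assumption \ref{assumption_new2} treating segments as atomic: a segment is credited only when it is \emph{entirely} fit, i.e.\ the proof operates with $I(\theta)=\{p_m:\bar{D}_m\bar{\theta}=0\}$. With that reading the counting goes through. If a minimizer $\theta$ were not a subsystem parameter, optimality against $\theta_{i_0}$ forces the fully-fit segments of $\theta$ to contain $|I(\theta)|\geq N_{i_0}>Sv_n(X)$ samples; by pigeonhole some subsystem $j$ contributes $n_j\geq v_n(X)$ of these samples; those samples satisfy both $y_k=x_k^\top\theta$ and $y_k=x_k^\top\theta_j$, so $\bar{\theta}$ and $\bar{\theta}_j$ both lie in the null space of the stacked data matrix, which is one-dimensional because $n_j\geq v_n(X)$ regressors have rank $n$; since both vectors have leading entry $1$, $\theta=\theta_j$, a contradiction. (Given $\Delta s_{\min}>v_n(X)$, even a single fully-fit segment pins $\theta$ down, so the pigeonhole is only needed when segments may be short.) Finally, your closing comparison among the subsystem parameters themselves also needs Assumption \ref{assumption_new2} explicitly: it is what guarantees that $\theta_i$ fully fits exactly the segments it generated and no others, so that the objective value of $\theta_i$ is $N-N_i$ and the minimum over $\{\theta_1,\dots,\theta_S\}$ is attained by the parameter that generated the most data.
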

\begin{proof}
    The proof is actually an extension of Lemma 7 in \cite{bako2011identification}.
    Suppose by contradiction that the solution $\theta$ to \eqref{prob_zero} does not lie in $\{\theta_1,\cdots,\theta_S\}$. And if we let $I(\theta)=\{p_m:\bar{D}_m\bar{\theta}=0\}$, then $|I(\theta)|\geq N_i \geq s v_n(X)$, where $|I(\theta)|$ is the sum of elements belonging to $I(\theta)$.
    Assumed that $n_i$ denotes the number of data generated by the $i$th submodel whose indices are contained in $I(\theta)$. Then we have
    \begin{equation}
        |I(\theta)| = \sum_{i=1}^S n_i \geq S v_n(X).
    \end{equation}
    In other words, there exists an index $j$ such that $n_j \geq v_n(X)$. Such data vector forms a matrix $\mathcal{X}=[\bar{D}_{t_1^j}^T,\cdots,\bar{D}_{t_{o_j}^j}^T]^T \in \mathbb{R}^{n_j \times (n+1)}$ where $\sum_{k=1}^{o_j}p_{t_k^j} = n_j$. 
    Then, both $\bar{\theta}$ and $\bar{\theta}_j$ lie in the kernel space (or the null space) of $\mathcal{X}$. 
    Since the initial entries of $\bar{\theta}$ and $\bar{\theta}_j$ are both equal to $1$, and $n_j \geq v_n(X)$, it follows that ${\theta}= {\theta}_j$, leading to a contradiction with the initial thesis.
\end{proof}

Considering Theorem \ref{theorem_mu} and the definition \eqref{tau}, it follows that $\theta_{i_0}$ is the unique solution to the combinatorial $\ell_0$-norm optimization problem \eqref{prob_zero} if there is a solution $z = [\bar{\theta}^\top \bar{D}_1^\top,\cdots,\bar{\theta}^\top \bar{D}_M^\top]^\top$ obeying that,
\begin{equation} \label{condition_mua}
    \|z\|_0< \frac{1}{2}(1+\frac{1}{\mu(A)}).
\end{equation}

Although it is challenging to solve the $\ell_0$-norm optimization problem directly, such as $P_J$ with $J(z)=\|z\|_0$, the segmented data and prior knowledge contribute significantly to addressing this sparse problem. We improve the traditional orthogonal matching pursuit (OMP) algorithm \cite{tropp2007signal,wang2012generalized} to address the sparse problem \eqref{prob_zero} in Appendix \ref{append_1}. However, this $\ell_0$-norm based methods exhibit bad robustness against noise and have significant computational complexity \cite{wang2012generalized}. Based on the above reasons, we consider the $\ell_1$-norm relaxation of the combinatorial $\ell_0$-norm sparse optimization problem \eqref{prob_zero} with the segmented data,
\begin{equation} \label{prob_one}
    \begin{aligned}
        \min_{\theta\in\mathbb{R}^{n}} ~& \lVert W_X z \rVert_1 \\
        \text{subject to } ~&A_Xz=b_X \\
        & z = [\bar{\theta}^\top \bar{D}_1^\top,\cdots,\bar{\theta}^\top \bar{D}_M^\top]^\top\\
        & W_X = \text{diag}(\nu(A_X^1)I_{p_1},\cdots,\nu(A_X^M)I_{p_M}) \\
    \end{aligned}
\end{equation} 
where $A_X = [A^{1}_X,\cdots,A^{M}_X]$ is partitioned in column order according to the switching instants $\{ s_m \}_{m=1}^{M+1}$, $W_X$ is the weighting matrix for increasing the sparsity of $z$ and $\nu(A_X^m)$ is defined as the average of the Euclidean norms of the column vectors in matrix $A_X^m = [a_{s_m}, a_{s_m +1}, \cdots, a_{s_{m+1} -1}]$,
\begin{equation} \label{nu}
    \nu(A_X^m) = \frac{\sum_{k=s_m}^{s_{m+1}-1} \lVert a_k \rVert_2}{p_m}, m=1,\cdots,M.
\end{equation}
We have the following theorem to discuss the uniqueness of the solution, $\theta$, for the sparse problem \eqref{prob_one}.
\begin{theorem} \label{theorem_one}
    If Assumption \ref{assumption_new2} holds and there is a parameter vector $\theta$ achieving the error vector $z = [\bar{\theta}^\top \bar{D}_1^\top,\cdots,\bar{\theta}^\top \bar{D}_M^\top]^\top$ based on the segmented data $\bar{D}=\{\bar{D}_1,\cdots,\bar{D}_M\}$ such that
    \begin{equation} \label{tau2}
        \|z\|_0< \frac{1+\tau(X)}{2},
    \end{equation}
    then $\theta$ is the unique solution to the $\ell_1$-norm optimization problem. $\tau(X)$ is defined in \eqref{tau}.
\end{theorem}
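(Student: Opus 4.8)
The plan is to reduce the weighted, parametrized $\ell_1$ problem \eqref{prob_one} to the canonical $\ell_1$ recovery setting of Theorem \ref{theorem_mu}, and then to lift the resulting uniqueness statement back from the error vector $z$ to the parameter $\theta$. First I would absorb the weighting matrix. Since each average norm $\nu(A_X^m)>0$, the matrix $W_X$ is positive diagonal, so the substitution $\tilde{z}=W_X z$ turns the objective $\|W_X z\|_1$ into $\|\tilde{z}\|_1$ and the constraint $A_X z=b_X$ into $\tilde{A}_X\tilde{z}=b_X$ with $\tilde{A}_X=A_X W_X^{-1}$. Because the mutual coherence in \eqref{mu} is built from normalized inner products, it is invariant under positive scaling of columns; hence $\mu(\tilde{A}_X)=\mu(A_X)=\tau(X)$ by the identity stated just after \eqref{tau}, and $\|\tilde{z}\|_0=\|z\|_0$ since $W_X$ has no zero entries.

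Next I would apply Theorem \ref{theorem_mu} to the free problem $\min\|\tilde{z}\|_1$ subject to $\tilde{A}_X\tilde{z}=b_X$ over all of $\mathbb{R}^N$. Substituting $\mu(\tilde{A}_X)=\tau(X)$ into the coherence bound of Theorem \ref{theorem_mu} gives precisely the sparsity requirement \eqref{tau2}, so $\tilde{z}$, and therefore the error vector $z$ attached to $\theta$, is the unique minimizer of the $\ell_1$ problem over the whole space and is simultaneously the unique sparsest solution.

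The step I expect to require the most care is transferring this to the parametrized problem actually posed in \eqref{prob_one}. Its feasible set is the intersection of the affine set $\{z:A_X z=b_X\}$ with the $\theta$-image $\{z:z^m=\bar{D}_m\bar{\theta},\ m=1,\dots,M,\ \theta\in\mathbb{R}^n\}$, which is contained in the free feasible set used above. Since the global $\ell_1$ minimizer $z$ is, by the hypothesis of the theorem, achievable by some $\theta$, it remains optimal over the smaller restricted set and is unique there as a vector. To conclude uniqueness of the \emph{parameter}, I would invoke injectivity of $\theta\mapsto z$: stacking all segments gives $z=\bar{D}\bar{\theta}$ with $\bar{D}=[\bm{y},-X^\top]$, so if two parameters produced the same $z$ then $\bar{D}(\bar{\theta}-\bar{\theta}')=0$. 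The fixed leading entry $1$ of every $\bar{\theta}$ forces the first coordinate of $\bar{\theta}-\bar{\theta}'$ to vanish, leaving $X^\top(\theta-\theta')=0$; full row rank of $X$, guaranteed by the segment rank and genericity conditions underlying Assumption \ref{assumption_new2}, then yields $\theta=\theta'$.

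The main obstacle is exactly this last transfer: Theorem \ref{theorem_mu} certifies uniqueness of the error vector over the full ambient space $\mathbb{R}^N$, whereas \eqref{prob_one} optimizes only over the lower-dimensional manifold parametrized by $\theta$. One must check both that the free optimum is feasible for the restricted problem, which is supplied by the existence assumption on $\theta$, and that the lifting $z\mapsto\theta$ is well defined and injective, which is where Assumption \ref{assumption_new2} together with the persistent-excitation rank conditions must be used explicitly rather than merely cited.
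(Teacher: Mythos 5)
Your proposal is correct and follows essentially the same route as the paper's proof: identify $\mu(A_X)$ with $\tau(X)$, invoke Theorem \ref{theorem_mu} to obtain uniqueness of the error vector $z$, and lift that uniqueness back to the parameter $\theta$ --- indeed you are more careful than the paper, which compresses the absorption of the weight matrix $W_X$, the passage from free minimization over $z\in\mathbb{R}^N$ to the $\theta$-parametrized feasible set, and the injectivity of $\theta\mapsto z$ into the phrase ``if we only focus on the variable $z$'' plus a citation of Assumption \ref{assumption_new2}. One small correction: the coherence bound in Theorem \ref{theorem_mu} is $\tfrac12\bigl(1+1/\tau(X)\bigr)$, which does not ``precisely'' coincide with the bound $\tfrac12\bigl(1+\tau(X)\bigr)$ in \eqref{tau2}; since $\tau(X)\le 1$, the latter implies the former, so your application of the theorem remains valid.
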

\begin{proof}
    According to the definition \eqref{tau}, we have $\tau(X)=\mu(A)=\mu(A_X)$. Then, if we only focus on the variable $z$, the $\ell_0$-norm problem \eqref{prob_zero} and the $\ell_1$-norm problem \eqref{prob_one} have the unique solution $z$ with the condition \eqref{tau2} according to Theorem \ref{theorem_mu}. Furthermore, if $z$ is unique and Assumption \eqref{assumption_new2} holds, $\theta$ can be formulated uniquely based the $z = [\bar{\theta}^\top \bar{D}_1^\top,\cdots,\bar{\theta}^\top \bar{D}_M^\top]^\top$.
\end{proof}

The unbiasedness and uniqueness of the solution $\theta$ to the sparse problem have been guaranteed according to aforementioned Lemma \ref{lemma_zero} and Theorem \ref{theorem_one}.
Then by solving the sparse optimization problem \eqref{prob_one}, the parameter of one of the subsystem $\theta$ can be extracted, and we use the following criterion to determine whether the segments belong to the extracted parameter vector $\theta$ in presence of noise,
\begin{equation} \label{I_theta}
    \begin{aligned}
        I(\theta) =\{\bar{D}_m:\frac{\|\bar{D}_m\theta\|_1}{\nu(\bar{D}_m) \lVert \theta \rVert_2 p_m}  \leq \varepsilon_{thres}, m=1,\cdots,M\}
    \end{aligned}
\end{equation}
where $\nu(\cdot)$ is defined in \eqref{nu}.

Inspired by the reweighted $\ell_1$ minimization method in \cite{bako2011identification}, an $\ell_1$-norm optimization algorithm with momentum method \cite{hao2021adaptive} is proposed for the sparse optimization problem \eqref{prob_one}, detailed in Algorithm \ref{algo:one}.
\begin{algorithm}[ht]
    \caption{$\ell_1$-norm Optimization Algorithm} \label{algo:one}
    \hspace*{0.02in} {\bf Require:} \\ 
    \hspace*{0.02in} $A_X\in \mathbb{R}^{(N-n)\times N},~b_X\in \mathbb{R}^{N-n}$, switching instants $\{ s_m \}_{m=1}^{M+1}$, segmented data $\bar{D}=\{\bar{D}_1,\cdots,\bar{D}_M\}$ and hyperparameters $\varepsilon_0, \varepsilon_{thres} , \alpha, \eta, v_0$.
    \begin{algorithmic}[1]
    \State Initial settings: $j=0$, $v_0=0.1$
    \begin{itemize}
        \item Initial weights: $W^{(0)}=diag(w_1^{(0)} I_{p_1},\cdots,w_M^{(0)} I_{p_M})$ with $w_k^{(0)}=1,k=1,\cdots,N$
    \end{itemize}
    \State While $\|\theta^j-\theta^{(j-1)}\|_2 \le \varepsilon_0$ 
    \begin{itemize}
        \item Solve the convex $\ell_1$-norm optimization problem
        \begin{equation}
            \begin{aligned}
                \theta^j = \arg\min_{\theta\mathbb{R}^{n}} \lVert W^{(j)} W_L^{-1} \phi^0(\theta) \rVert_2
            \end{aligned}
        \end{equation}
        where $\phi^0(\theta) = [\bar{\theta}^\top \bar{D}_1^\top,\cdots,\bar{\theta}^\top \bar{D}_M^\top]^\top$, $W_L = \text{diag}(\nu(\bar{D}_1)I_{p_1},\cdots,\nu(\bar{D}_M)I_{p_M})$.
        \item Update the weights with \eqref{nu}
        \begin{equation}
            \begin{aligned}
                v_{j+1} & = \alpha v_j - \eta \frac{1}{\nu(\bar{D}_m \bar{\theta}^j)}\\
                w_m^{(j+1)} &= \frac{1}{ \nu(\bar{D}_m \bar{\theta}^j)+v_{j+1}},~m=1,\cdots,M \\
            \end{aligned}
        \end{equation} 
        \item $j=j+1$
    \end{itemize}
    \State Concatenate the data, 
    \begin{equation}
        \begin{aligned}
            I(\theta^j) =\{\bar{D}_m:\frac{|\bar{D}_m\theta^j|}{\nu(\bar{D}_m) \lVert \theta^j \rVert_2}  \leq \varepsilon_{thres}, m=1,\cdots,M\}, 
        \end{aligned}
    \end{equation}
    and re-estimate the parameter vector $\theta^j$ as \eqref{thetar}.
    \State \Return $\theta^j,~I(\theta^j)$ 
    \end{algorithmic}
\end{algorithm}

\subsection{The complete algorithm}

Considering Theorem \ref{theorem_one} and the definition of \eqref{tau}, all the parameter vectors $\{\theta_1,\cdots,\theta_s\}$ can be derived by solving the sparse problem \eqref{prob_zero} as Algorithm \ref{algo:one}. If one of the subsystem parameters $\theta$ is extracted, the data segments belonging to this subsystem are removed from the data $[\bar{\theta}^\top \bar{D}_1^\top,\cdots,\bar{\theta}^\top \bar{D}_M^\top]^\top$ and $z$ can be reformulated based on the remaining data. Obviously, if we want to extract the parameter vectors $\{\theta_1,\cdots,\theta_s\}$ one after another uniquely, the condition \eqref{tau2} still need to be satisfied by the remaining data according to Theorem \ref{theorem_one}. 

Then we give the complete condition for the unbiasedness and uniqueness of $\{\theta_1,\cdots,\theta_s\}$ in the following theorem, which is mentioned in \cite{bako2011identification}.
\begin{theorem} \label{theorem_all}
    (Theorem $14$ in \cite{bako2011identification}) Consider the data matrix $X\in \mathbb{R}^{n \times N }$ generated by the system \eqref{s1} and assume that 
    \begin{equation}
        \begin{aligned}
            &N_1>N-\vartheta(X_1) >0, \\
            &N_2>N-N_1-\vartheta(X_2)>0,\\
            &\vdots\\
            &N_{S-1}>N-N_1-\cdots-N_{S-2}-\vartheta(X_{S-1})>0 \\
        \end{aligned}
    \end{equation}
    where $\vartheta(X)= 1/2(1+1/\tau(X))$ with $\tau(X)$ defined in \eqref{tau}. Then all the parameter vectors $\{\theta_1,\cdots,\theta_S\}$ can be extracted one after another by solving the sparse problem \eqref{prob_zero}.
\end{theorem}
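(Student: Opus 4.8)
The plan is to argue by induction on the extraction round $r = 1, \dots, S-1$, peeling off one subsystem per round and combining Lemma \ref{lemma_zero} with the mutual-coherence uniqueness bound of Theorem \ref{theorem_one} (equivalently Theorem \ref{theorem_mu}). For the bookkeeping, suppose subsystems $1,\dots,r-1$ have already been extracted and their data discarded; write $R_{r-1} = N - N_1 - \cdots - N_{r-1}$ for the number of surviving samples, let $X_r$ be the regressor matrix assembled from those samples, and re-pose the sparse problem \eqref{prob_zero} on this reduced data set. Throughout, the relevant threshold is $\vartheta(X_r) = \tfrac12\bigl(1 + 1/\tau(X_r)\bigr)$, read off the reduced matrix $X_r$.

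For the base case $r=1$ I would apply Lemma \ref{lemma_zero} to the full data set: provided every subsystem is sufficiently excited, the solution of \eqref{prob_zero} is the parameter of the subsystem generating the most samples, which I relabel as subsystem $1$ with $N_1$ samples. Its error vector $z = [\bar\theta^\top \bar D_1^\top, \cdots, \bar\theta^\top \bar D_M^\top]^\top$ vanishes exactly on the samples of subsystem $1$ and nowhere else, so $\|z\|_0 = N - N_1$; here Assumption \ref{assumption_new2} is indispensable, as it forbids a stray segment from spuriously fitting $\theta_1$ and thereby pins down the zero pattern. The first hypothesis $N_1 > N - \vartheta(X_1)$ rearranges to $\|z\|_0 < \vartheta(X_1)$, which is exactly the premise of Theorem \ref{theorem_one}; hence $\theta_1$ is the unique solution and the first extraction is unambiguous.

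The inductive step repeats this argument verbatim on the reduced data. Assuming subsystems $1,\dots,r-1$ are gone, Lemma \ref{lemma_zero} applied to the $R_{r-1}$ surviving samples returns the most-represented remaining subsystem, which I relabel $r$ with $N_r$ samples; its error vector now satisfies $\|z\|_0 = R_{r-1} - N_r$. The $r$-th hypothesis $N_r > R_{r-1} - \vartheta(X_r)$ rearranges once more to $\|z\|_0 < \vartheta(X_r)$, so Theorem \ref{theorem_one} yields uniqueness of $\theta_r$ on the reduced problem, while the companion strict inequality $R_{r-1} - \vartheta(X_r) > 0$ keeps the reduced problem non-degenerate by guaranteeing that the surviving samples still exceed the coherence threshold and that at least one further subsystem remains. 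After $S-1$ rounds every surviving sample belongs to a single subsystem, so $\theta_S$ is recovered directly as the unique least-squares solution of the consistent overdetermined system $\bm y = X_S^\top \theta_S$, with no appeal to the sparse problem.

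The step I expect to be the main obstacle is the coherence bookkeeping under data deletion. Because $\vartheta(X_r)$ is computed from the shrinking matrix $X_r$ rather than the global $X$, each hypothesis in the statement must be matched to its own reduced regressor, and one has to control how the mutual coherence $\tau(X_r) = \mu(A_{X_r})$ moves as columns are removed round by round. A secondary check is that the excitation premise underlying Lemma \ref{lemma_zero} --- the $N_i > S\, v_n(X)$ type condition certifying that the most-data subsystem is the one actually extracted --- continues to hold after each relabeling; propagating this together with Assumption \ref{assumption_new2} and the genericity index $v_n(X)$ from one reduced problem to the next is the delicate part of making the induction airtight.
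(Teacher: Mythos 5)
Your proposal is correct and takes essentially the same route as the paper: the paper gives no proof of this statement beyond citing Theorem 14 of \cite{bako2011identification} and the sketch in the preceding paragraph, and that sketch is precisely your peel-off induction --- extract one parameter via Lemma \ref{lemma_zero}, certify uniqueness by the coherence bound of Theorem \ref{theorem_one} after rearranging the hypothesis $N_r > N - N_1 - \cdots - N_{r-1} - \vartheta(X_r)$ into $\|z\|_0 \leq N - N_1 - \cdots - N_{r-1} - N_r < \vartheta(X_r)$, delete that subsystem's segments, and repeat on the reduced data. Two cosmetic caveats: the zero pattern of $z$ need not be exact (samples from other subsystems may accidentally satisfy $x_k^\top\theta_r = y_k$, so one only has the inequality $\|z\|_0 \leq N - \cdots - N_r$, which is all the argument needs), and your threshold $\vartheta(X_r)=\tfrac{1}{2}\bigl(1+1/\tau(X_r)\bigr)$ is the intended, Theorem~\ref{theorem_mu}-consistent reading of condition \eqref{tau2}, which as printed in Theorem \ref{theorem_one} has the reciprocal misplaced.
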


Therefore, the complete framework for the identification of the switched linear systems \eqref{s1}, including the identification of switching instants, data segmentation and the identification of subsystem parameters, can be formulated as Algorithm \ref{algo:com}.
\begin{algorithm}[h] 
    \caption{Identification for switched linear system with $\ell_1$-norm} \label{algo:com}
    \hspace*{0.02in} {\bf Require:} \\ 
    \hspace*{0.02in} Time series dataset $\mathcal{T}=\{x_{1},x_{2},\cdots,x_{N}\},~\bm{y}=\{y_1,\cdots,y_N\}$ and hyperparameters $\Delta s_{\min}, \varepsilon_0, \varepsilon_{thres} , \alpha, \eta, v_0$.
    \begin{algorithmic}[1] 
    \State Determine the switching instants $\{ s_m \}_{m=1}^{M+1}$ and derive the segmented data $\bar{D}=\{\bar{D}_1,\cdots,\bar{D}_M\}$ by Algorithm \ref{algo:si} with the hyperparameter $\Delta s_{\min}$.
    \State  Initial settings: $S=0,~\Theta=\{\null\},~E=\bar{D}$
    \State While $E\neq \emptyset$
    \begin{itemize}
        \item Based on the rest dataset $E$, derive the parameter vector $\theta$ and its segment data $I(\theta)$ through the $\ell_1$-norm problem \eqref{prob_one} by Algorithm \ref{algo:one} with $\varepsilon_0, \varepsilon_{thres} , \alpha, \eta, v_0$.
        \item Update the set: 
        $E \leftarrow  E \cap I(\theta), \Theta \leftarrow \Theta \cup \{\theta\}$.
        \item $S=S+1$
    \end{itemize}
    \State \Return $\Theta$ and $S$.
    \end{algorithmic}
\end{algorithm}

\section{A new sufficient persistent excitation condition for switched linear system}
In this section, we propose a new sufficient persistent excitation condition for switched linear system, based on the constrained switching mechanism, as a comparison with Assumption \ref{assumption_ori}. 
The difference lies in the fact that in the previous section, we proposed Assumption \ref{assumption_new} with $\Delta{s_{min}}>v_n(X)$ from the perspective of sparse optimization to address the identification problem, while in this section, we illustrate the sufficiency of the persistent excitation condition from an existential standpoint.

According to the data segmentation $\bar{D}=\{\bar{D}_1,\cdots,\bar{D}_M\}$, if Assumption \ref{assumption_seg} holds, the PE problem \eqref{pe_prob} can be formulated as
\begin{equation} \label{pe_prob_new}
    \begin{aligned}
    &\sum_{m=1}^{M}\sum_{i=1}^{S} \zeta_{i,m}\|\bar{D}_m\bar{\theta}_i\|_2^2=0,\\
    \text{subject to} ~& \zeta_{i,m} = I\{\Lambda_m=i\} \\
    &\sum_{i=1}^{S}\zeta_{i,m} = 1
    \end{aligned}
\end{equation}
where $\Lambda_m\in\{1,\cdots,S\}$ indicates the order of the active subsystem among the time instants corresponding to $\bar{D}_m$ and $I\{\cdot\}$ represents the indicator function.
To discuss the uniqueness of the solution to the problem \eqref{pe_prob_new}, the identification for the switching instants and the data segmentation are further investigated. 
Considering the segmentation with $e_k\equiv 0$, we have $\beta_m=\theta_{\Lambda_m}$. Then how to guarantee the identifiability of the switching instants or the uniqueness of the solution to the segmentation problem \eqref{jm} can be quite challenge and confusing. 

Suppose that the adjacent two segments $ \bar{D}_m, \bar{D}_{m+1} $ are partitioned with the switching instant $s_{m+1}$. Then, if there exist some samples fitting two subsystems, $x_{s_{m+1}-j}^\top\theta_{\Lambda_m}=x_{s_{m+1}-j}^\top \theta_{\Lambda_{m+1}},~\forall j=1,\cdots,k'$, then $s_{m+1}'=s_{m+1}-j,j=1,\cdots,k'$ can be another solution to the segmentation problem \eqref{jm} for $s_{m+1}$. 
This example is provided to illustrate the challenge of identifiability of the switching instants, and to avoid the ambiguity, an assumption for the uniqueness of the solution to the segmentation problem is proposed as Assumption \ref{assumption_si}.

\begin{assumption} \label{assumption_si}
    Consider the data segmentation, $\{ s_m \}_{m=1}^{M+1}$, $\{\beta_m\}_{m=1}^M$ and $\bar{D}=\{\bar{D}_1,\cdots,\bar{D}_M\}$, for $m=1,\cdots,M$:
    \begin{enumerate}
        \item $\beta_m\in\{\theta_1,\cdots,\theta_S\}$
        \item $\beta_m\neq \beta_{m+1}, m=1,\cdots,M-1$.
        \item $x_{s_m}^\top \beta_m=0, x_{s_m}^\top \beta_{m'}\neq 0, \forall m'=\{1,\cdots,M\}\backslash\{m\}$
        \item There exists a subset $ \bar{D}_m' \subseteq \bar{D}_m, \forall m=1,\cdots,M $ such that $\sum_{x_k\in\bar{D}_m'}x_kx_k^\top$ is nonsingular.
    \end{enumerate}
\end{assumption}
\begin{theorem} \label{theorem_seg}
    In the absence of noise, $e_k\equiv 0$, if Assumption \ref{assumption_si} holds, then $\{s_m\}_{m=1}^{M+1}$, $\{\beta_m\}_{m=1}^M$ is the unique solution to the problem \eqref{jm} with fixed $M$.
\end{theorem}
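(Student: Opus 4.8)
The plan is to show that the stated configuration attains the global minimum of \eqref{jm} and that no other configuration can attain it. First I would observe that, since $e_k\equiv 0$, each true segment is fit exactly by its own parameter $\beta_m=\theta_{\Lambda_m}$, so the objective in \eqref{jm} evaluates to $0$ on $(\{s_m\},\{\beta_m\})$ and hence $\mathcal{J}(M)=0$. Consequently any competing minimizer $(\{\tilde s_m\},\{\tilde\beta_m\})$ must also give objective $0$; because $l(\cdot)=\|\cdot\|_2^2$ is a sum of squares, this forces every one of its segments to be fit exactly, i.e. $y_k=x_k^\top\tilde\beta_m$ for all $k\in[\tilde s_m,\tilde s_{m+1}-1]$ and every $m$. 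The remainder of the argument is to upgrade ``each alternative segment fits perfectly'' to ``the alternative segmentation coincides with the true one.''

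I would proceed by a left-to-right induction on the segment index, with base case $\tilde s_1=s_1=1$. Assuming $\tilde s_m=s_m$, the inductive step must prove $\tilde s_{m+1}=s_{m+1}$ and $\tilde\beta_m=\beta_m$. The clean half is ruling out $\tilde s_{m+1}>s_{m+1}$: in that case the alternative segment $[s_m,\tilde s_{m+1}-1]$ contains the whole true segment $[s_m,s_{m+1}-1]$, which by the fourth condition of Assumption \ref{assumption_si} contains a subset of regressors with $\sum x_kx_k^\top$ nonsingular; since $\tilde\beta_m$ and $\beta_m$ both satisfy this full-rank family of exactly-fit equations, $\tilde\beta_m=\beta_m=\theta_{\Lambda_m}$. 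But then the sample at index $s_{m+1}$, lying in the alternative segment $m$, is fit by $\beta_m$, contradicting the third condition of Assumption \ref{assumption_si}, which asserts that the boundary sample at $s_{m+1}$ fits $\beta_{m+1}$ and no other $\beta_{m'}$. Hence $\tilde s_{m+1}\le s_{m+1}$.

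The hard part is the opposite inequality $\tilde s_{m+1}\ge s_{m+1}$, i.e. excluding a shortened alternative segment whose boundary falls strictly inside the true segment $m$. Here the full-rank subset guaranteed by the fourth condition may lie beyond $\tilde s_{m+1}-1$, so one cannot immediately identify $\tilde\beta_m$, and the third condition only protects the true segment-start samples, not interior indices. I would resolve this by a global counting argument exploiting that both partitions use exactly $M$ segments over $[1,N]$: a boundary placed strictly inside true segment $m$ either forces a spurious split of one true segment into two adjacent alternative segments, or forces some later alternative segment to contain two complete true segments. The former is excluded by combining the fourth condition (each true segment determines its parameter) with the second condition (adjacent true parameters differ), since two alternative segments carved from the same true segment would both be forced to $\theta_{\Lambda_m}$; the pigeonhole over $M$ segments then pushes a distinct true boundary into the interior of a single alternative segment. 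The latter is excluded because an alternative segment covering two whole adjacent true segments would have to equal two distinct parameters $\theta_{\Lambda_j}\neq\theta_{\Lambda_{j+1}}$. In each branch the separation of the third condition, applied to the relevant true segment-start sample, delivers the contradiction and forces $\tilde s_{m+1}=s_{m+1}$.

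Once all boundaries coincide, the alternative segments equal the true segments exactly, and the fourth condition (a full-rank subset in each) together with the first condition ($\beta_m\in\{\theta_1,\dots,\theta_S\}$) identifies $\tilde\beta_m=\beta_m$ uniquely for every $m$, completing the proof. I expect the genuine technical obstacle to be the shortened/straddling case above: disentangling a ``spurious split'' from a ``partial straddle'' is where the second, third and fourth conditions must be woven together with the fixed segment count $M$, and it is the step most easily glossed over.
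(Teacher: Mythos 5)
Your zero-error observation, your base case, and your treatment of the lengthened case $\tilde s_{m+1}>s_{m+1}$ are all sound; indeed that half is argued more carefully than in the paper, since you first pin down $\tilde\beta_m=\theta_{\Lambda_m}$ via the full-rank subset of the fourth condition and only then invoke the third condition at the true boundary sample $x_{s_{m+1}}$, which is a legitimate use of an assumption stated for the true parameters. Your overall structure (left-to-right induction over boundary indices) also differs from the paper's, which inducts on the number of segments $M$, pins down the last boundary by showing $s_M'\leq s_M$ and $s_M\leq s_M'$, peels off the last segment, and invokes the inductive hypothesis; that structural difference is harmless.

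The genuine gap is exactly where you predicted it, and your counting argument does not close it. The dichotomy ``spurious split, or some later alternative segment contains two complete true segments'' is not exhaustive: consider the competing segmentation that keeps every true boundary except that it replaces $s_{m+1}$ by some $\tilde s_{m+1}$ with $s_m<\tilde s_{m+1}<s_{m+1}$, so that alternative segment $m$ is a proper prefix of true segment $m$, and alternative segment $m+1$ is the remaining suffix of true segment $m$ together with all of true segment $m+1$, ending at $s_{m+2}-1$. This uses exactly $M$ segments and $M-1$ interior boundaries, so the pigeonhole yields nothing; no true segment contains two alternative segments, and no alternative segment contains two complete true segments. Moreover, your stated exclusion of the spurious-split branch (``both pieces would be forced to $\theta_{\Lambda_m}$'') contradicts your own earlier, correct, observation: the fourth condition only supplies a full-rank subset of the whole true segment, so a proper sub-piece need not determine its parameter. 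For the straddling configuration above, zero error forces $\tilde\beta_{m+1}=\theta_{\Lambda_{m+1}}$ (by the fourth condition on true segment $m+1$) and then requires only that the suffix samples satisfy $x_k^\top(\theta_{\Lambda_m}-\theta_{\Lambda_{m+1}})=0$ for $k=\tilde s_{m+1},\dots,s_{m+1}-1$, i.e., that they lie on the kernel intersection of the two submodels. Nothing in Assumption \ref{assumption_si}, read as a condition on the true segmentation only, forbids this: its third condition constrains only the boundary samples $x_{s_j}$, not interior samples. The paper escapes this direction by applying the third condition symmetrically to the competing segmentation's boundary sample as well (it uses $x_{s_M'}^\top \beta_M=0$ together with $x_{s_M'}^\top \beta_{M-1}'\neq 0$ to conclude $s_M\leq s_M'$), i.e., it implicitly reads the assumption as constraining any candidate minimizer of \eqref{jm}, not just the true partition. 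Under your weaker (and more natural) reading, the shortened case cannot be excluded at all, so your proof cannot be completed without either adopting that stronger reading of the assumption or adding a hypothesis ruling out interior samples of a segment fitting a neighboring submodel.
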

\begin{proof}
    We will prove the theorem by the Mathematical Induction and take $M$ as the induction variable.

    To the basic case, consider $M=2, \{s_m\}_{m=1}^3$ with $s_1=1,s_3=N+1$. Obviously, the case $M=2$ is true and the problem \eqref{jm} with $M=2$ have the unique solution according to the Assumption \ref{assumption_si}.

    Then we make the induction that the theorem holds for $M-1$ and consider the case with $M$. Suppose by contradiction that there are two solutions $\{s_m,\beta_m\}_{m=1}^{M}$ and $\{s_m',\beta_m'\}_{m=1}^{M}$ to the problem \eqref{jm} with $M$, and we have $s_1=s_1'=1,s_{M+1}=s_{M+1}'=N+1$. 
    According to the last segment and the Assumption \ref{assumption_si}, we have $\beta_{M-1}\neq\beta_M=\beta_M'\neq \beta_{M-1}'$.
    Considering the beginning of the last segment, $s_{M}'$, and the Assumption \ref{assumption_si} with $s_{M}$, $x_{s_{M}}^\top \beta_M=0, x_{s_{M}}^\top \beta_{M-1}\neq 0$, we have $s_{M}'\leq s_{M}$. 
    This is because if not, this mismatch will result in $x_{s_{M}}^\top \beta_{\bar{m}}'\neq 0, \bar{m}\in\{1,\cdots,M-1\}$ which is contradicted to the third condition in Assumption \ref{assumption_si}. For the same reason with $s_{M}$ and $x_{s_{M}'}^\top \beta_M=0, x_{s_{M}'}^\top \beta_{M-1}'\neq 0$, we have $s_{M}\leq s_{M}'$. Then we can derive $s_{M}= s_{M}'$ and the partition for the last data segment $\bar{D}_m$ is unique. According to the induction, the solution to the partition of the data $\bar{D}=\{\bar{D}_1,\cdots,\bar{D}_{M-1}\}$ with $M-1$ segments is unique and the case with $M$ is proved.
    Finally, according to the induction, the theorem is proved.
\end{proof}

Although the uniqueness of the solution to the segmentation problem \eqref{jm} and the identifiability of the switching instants are guaranteed by Theorem \ref{theorem_seg}, 
The uniqueness of the solution to the segmentation problem \eqref{jm} and the identifiability of the switching instants are not necessary for the identification of the parameters of the switched linear system \eqref{s1}, although they are guaranteed by Theorem \ref{theorem_seg}. As a matter of fact, Assumption \ref{assumption_si} is a sufficient but not necessary condition for Assumption \ref{assumption_new2}. Then, we continue to discuss the persistent excitation condition for the switched linear system based on Assumption \ref{assumption_new2}.

\begin{assumption} \label{new_pe}
    For the switched linear system \eqref{s1} with segmented data $\bar{D}=\{\bar{D}_1,\cdots,\bar{D}_M\}$ and switching instants $\{ s_m \}_{m=1}^{M+1}$:
    \begin{enumerate}
      \item $\theta_i \neq \theta_j,~i\neq j,~\forall i,j \in \{1,\cdots,S\}$.
      \item Assumption \ref{assumption_new2} holds.
      \item There exists a subset $ \bar{D}_m' \subseteq \bar{D}_m, \forall m=1,\cdots,M $ such that $\sum_{x_k\in\bar{D}_m'}x_kx_k^\top$ is nonsingular.
      \item For any segment $\bar{D}_m$, there exists $x_{k^m}\in \bar{D}_m$ such that $x_{k^m}^\top \theta_{\Lambda_m}\neq 0, \forall i\in\{1,\cdots,S\}\backslash\{\Lambda_m\}$.
    \end{enumerate}
\end{assumption}

\begin{theorem} \label{theorem_pe}
    In the absence of noise $e_k\equiv 0,~\forall k\in \mathbb{R}^N$ and suppose Assumption \ref{new_pe} holds. Then the segmented data $\bar{D}=\{\bar{D}_1,\cdots,\bar{D}_M\}$ and $\{\zeta_{i,m},i=1,\cdots,S\}_{m=1}^M$ are persistently exciting for the switched linear system \eqref{s1}.
\end{theorem}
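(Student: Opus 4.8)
The plan is to recognize that, by the formulation \eqref{pe_prob_new}, persistence of excitation is exactly the statement that the true parameters $\{\theta_i\}_{i=1}^S$ together with the true labels $\{\Lambda_m\}_{m=1}^M$ (equivalently $\zeta_{i,m}=I\{\Lambda_m=i\}$) are the \emph{unique} feasible point at which the objective vanishes. I would start by exploiting nonnegativity: since every summand $\|\bar{D}_m\bar{\theta}_i\|_2^2\ge 0$ and exactly one index is active per segment ($\zeta_{i,m}\in\{0,1\}$ with $\sum_i\zeta_{i,m}=1$), the single constraint $\sum_{m}\sum_i\zeta_{i,m}\|\bar{D}_m\bar{\theta}_i\|_2^2=0$ decouples into $M$ independent requirements $\|\bar{D}_m\bar{\theta}_{\Lambda_m}\|_2^2=0$, i.e. $\bm{y}^m=\mathcal{T}_m^\top\theta_{\Lambda_m}$ for the label $\Lambda_m$ selected by any candidate solution.

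Next I would pin down the parameter \emph{value} on each segment. Item 3 of Assumption \ref{new_pe} provides a subset $\bar{D}_m'\subseteq\bar{D}_m$ with $\sum_{x_k\in\bar{D}_m'}x_kx_k^\top$ nonsingular, so $\mathrm{rank}(\mathcal{T}_m)=n$ and the linear system $\bm{y}^m=\mathcal{T}_m^\top\theta$ has a unique solution in $\theta$. In the noiseless regime ($e_k\equiv 0$) the true generating parameter already solves this system, hence the active parameter value returned by any feasible solution must coincide with the true parameter of segment $m$. This settles the parameter values immediately.

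It remains to show that the label assignment itself is forced, and this is where I expect the real work. I would argue by contradiction: if two segments generated by different true subsystems were assigned the same index $i$, then $\theta_i$ would have to fit both, forcing the two true parameters to coincide and contradicting item 1 ($\theta_i\neq\theta_j$). The distinguishing-sample condition (item 4) together with Assumption \ref{assumption_new2}, that no segment fits two submodels, removes any residual ambiguity: each segment contains a sample $x_{k^m}$ on which only its own subsystem gives a zero residual, $x_{k^m}^\top(\theta_{\Lambda_m}-\theta_i)\neq 0$ for $i\neq\Lambda_m$. Consequently two segments share a label if and only if they share a true subsystem, so the candidate partition is exactly the true partition and the candidate parameters are exactly $\{\theta_i\}_{i=1}^S$, up to the standard harmless relabeling of subsystem indices. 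This uniqueness is precisely the claim that $\bar{D}$ and $\{\zeta_{i,m}\}$ are persistently exciting.

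The main obstacle is the third step: cleanly separating the (immediate) recovery of parameter values from the combinatorial recovery of the discrete labels, and verifying that items 1, 2 and 4 jointly exclude every spurious relabeling, in particular that no segment can be shifted onto a wrong-but-consistent subsystem and that all $S$ subsystems remain represented. The nonnegativity decoupling and the rank argument are routine; the care lies in the contradiction arguments that lock the discrete assignment to the truth.
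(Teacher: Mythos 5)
Your proposal is correct and follows essentially the same route as the paper's own proof: decouple the vanishing nonnegative sum into per-segment conditions, pin down the parameter value on each segment via the nonsingularity (rank) condition of Assumption \ref{new_pe}, and conclude uniqueness of the solution to \eqref{pe_prob_new} up to a permutation of subsystem labels. If anything, your handling of the combinatorial label-recovery step (distinct parameters plus the counting/contradiction argument that locks each segment to its true subsystem) is more explicit than the paper's one-line ``ignoring the permutation of the subsystem order'' remark, which glosses over exactly that point.
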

\begin{proof}
    Considering each segment $\bar{D}_m$ for $m=1,\cdots,M$, we have the unique solution to the equation $\|\bar{D}_{m}\bar{\theta}\|_2^2=0$ with $\bar{\theta}=\bar{\theta}_{\Lambda_m}, \zeta_{\Lambda_m,m}=1,\zeta_{i,m}=0,\forall i\in\{1,\cdots,S\}\backslash\{\Lambda_m\}$ according to the fourth condition of Assumption \ref{assumption_new2}. Then ignoring the permutation of the subsystem order $(1,\cdots,S)$, $\{\theta_i,\zeta_{i,m},m=1,\cdots,M\}_{i=1}^{S}$ is the unique solution to the segmentation problem \eqref{pe_prob_new} and $\bar{D}=\{\bar{D}_1,\cdots,\bar{D}_M\}$ and $\{\zeta_{i,m},i=1,\cdots,S\}_{m=1}^M$ are persistently exciting.
\end{proof}

\section{Experiments}
In this section, experiments with both periodic switching instants and random switching instants are conducted to assess the performance of our identification methods for switched linear systems.
\subsection{Numerical Experiments with periodic switching instants}

In this section, our identification algorithms are applied into the switched linear system \eqref{s1} composed of three linear subsystems of order two $n_a=n_b=2,n=4$ which is similar in \cite{bako2011identification} with their parameters $\lambda_k\in\{1,2,3\}, k=1,\cdots,N$ and $\{\theta_i\}_{i=1}^3$:
\begin{equation}
    \begin{aligned}
        \theta_1 & = [-0.4,0.25,-0.15,0.08]^\top \\
        \theta_2 & = [0.55,-0.58,-1.1,1.2]^\top \\
        \theta_3 & = [1,-0.24,-0.65,0.3]^\top.
    \end{aligned}
\end{equation}
The time series input-output data $\mathcal{D}=\{(u_k,y_k)_{k=1}^N\}$ with $N=1000$ samples is generated by \eqref{s1} with the above parameters and the following conditions:
\begin{enumerate}
    \item The input $\{u_k\}_{k=1}^N$ is generated from Gaussian distribution with zero mean and variance unity.
    \item The Gaussian noise term $\{e_k\}_{k=1}^{N}$ with zero mean is generated such that the Signal to Noise Ratio (SNR) is equal to $30$dB with respect to the output $\{y_k\}_{k=1}^N$ by adjusting the variance. 
    \item The switching instants are periodic $\{s_m\}_{m=2}^{11}=\{100,200,\cdots,1000\}$ with the discrete state of each segment $\{\Lambda_m\}_{m=1}^{10}=\{1,2,3,1,2,1,2,1,3,2\}$.
\end{enumerate}
Although the input $\{u_k\}_{k=1}^N$ is generated randomly, the intersection of the kernel spaces corresponding to subsystem parameters $\theta_1, \theta_2, \theta_3$ are inevitable, which is summarized as follows:
\begin{equation}
    \begin{aligned}
        &|\{I(\theta_1)\cap I(\theta_2)\}|/N =  30.4\% \\
        &|\{I(\theta_1)\cap I(\theta_3)\}|/N =  26.9\% \\
        &|\{I(\theta_2)\cap I(\theta_3)\}|/N =  32.1\% \\
    \end{aligned}
\end{equation}
where $|\cdot|$ denotes the cardinality of the set and $I(\theta)$ is defined in \eqref{I_theta} with $\varepsilon_{thres}=10^{-4}$.
The first $800$ samples are defined as training set while the last $200$ samples are used for test set. 
A criterion named "Fit" is used to estimate the fitting error and the prediction accuracy which is proposed in \cite{ljung1995system}:
\begin{equation} \label{fit}
    \text{Fit} = \left(1-\frac{\lVert \hat{y}-y \Vert_2}{\lVert y-\bar{y} \bm{1}_N\Vert_2}\right) \times 100\%
\end{equation}
where $y$ is the true output, $\hat{y}$ is the model output, $\bar{y}$ is the mean of $y$ and $\bm{1}_N$ represents the $N$-dimensional column vector with all entries equal to one.

Then the reproducible algorithm in \cite{bako2011identification} is compared with our proposed Algorithm \ref{algo:com} which utilizes $\ell_1$-norm optimization, with hyperparameters set as outlined in Table \ref{para_set}. The prediction results are depicted in Fig. \ref{fig:pre1}. Additionally, the improved OMP algorithm in Appendix \ref{append_1} is employed to identify the segmented data as for further comparison. 
\begin{table}[hb]
    \begin{center}
    \caption{Parameter Settings} \label{para_set}
    \begin{tabular}{llll} \hline
    Parameters & Value \\\hline
    $\Delta_{s_{min}}$ & $10$\\
    $\varepsilon_0$ & $0.001$ \\
    $\varepsilon_{thres}$ & $1\times 10^{-4}$ \\
    $\alpha$ & $0.9$ \\
    $\eta$ & $0.01$ \\
    $v_0$ & $0.1$ \\ \hline
    \end{tabular}
    \end{center}
\end{table}
\begin{figure} 
    \centering
    \includegraphics[width=\linewidth]{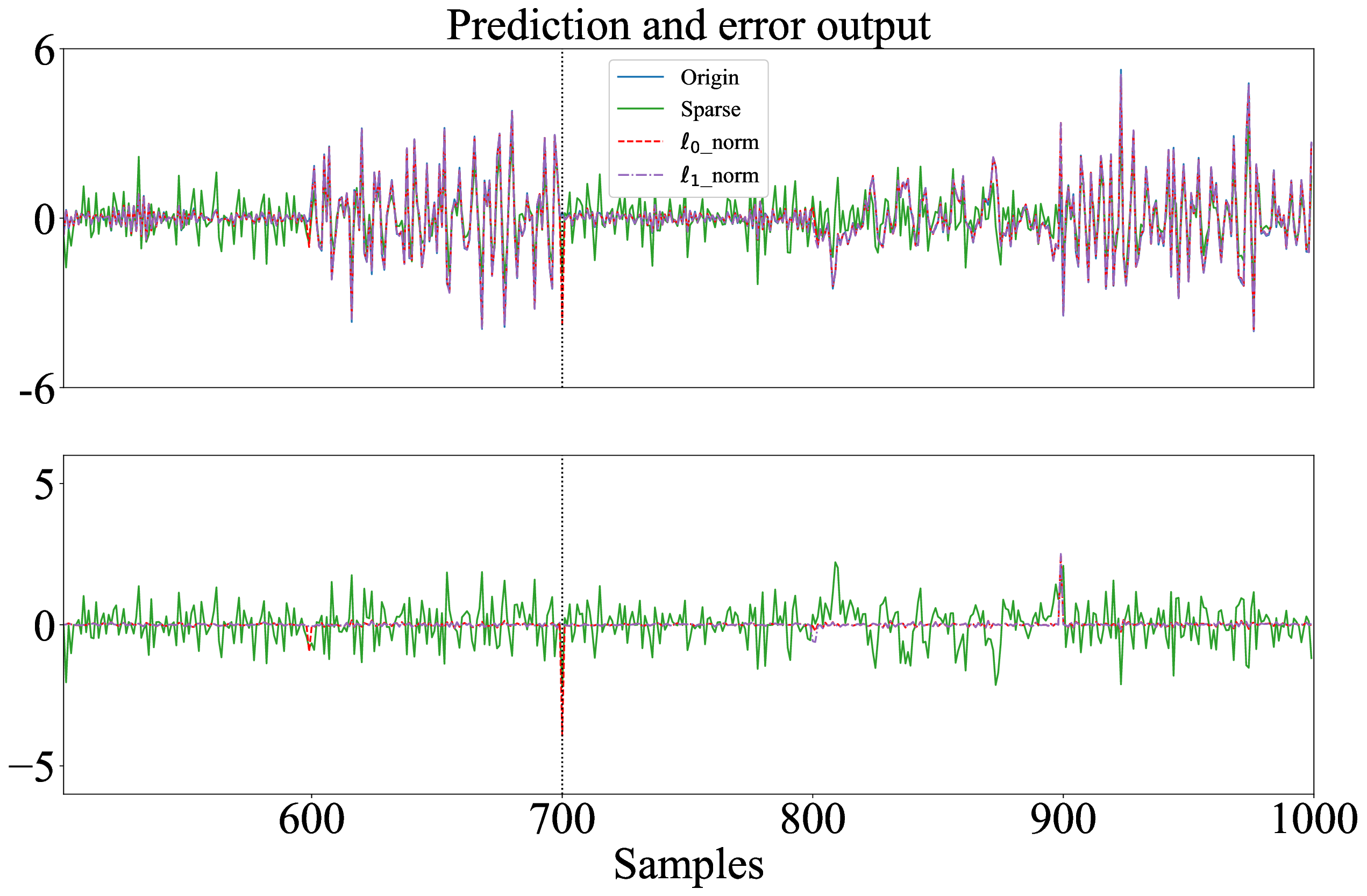}  
    \caption{300 identification and 200 test data with $SNR=30dB$. The blue solid line, marked with "Origin", represents the true output. The green solid line, marked with "Sparse", denotes the model prediction by the reproducible algorithm in \cite{bako2011identification}. 
    The red dotted line, marked as "$\ell_0$-norm" represents the model prediction by the improved OMP algorithm, as Algorithm \ref{algo:zero} in Appendix \ref{append_1} while the purple dot-dashed line, marked as "$\ell_1$-norm" represents the result of Algorithm \ref{algo:com} with $\ell_1$-norm. 
    } \label{fig:pre1}
\end{figure}

To test the robustness of the identification algorithm for the switched linear system, $100$ different independent runs are also conducted as the Monte Carlo experiments. 
The results for the single run and the Monte Carlo experiments are summarized in Table \ref{compare_fit} and Table \ref{monte_para}.
The histogram of the $100$ independent runs for Algorithm \ref{algo:com} with $\ell_0$-norm is depicted in Fig. \ref{fig:hist0} while the result for $\ell_1$-norm is shown in Fig. \ref{fig:hist1}.

As depicted in Fig. \ref{fig:pre1} and the summary in Table \ref{compare_fit} and Table \ref{monte_para}, illustrating the prediction results, both $\ell$-norm and $\ell_1$-norm methods in Algorithm \ref{algo:com} exhibit superior performance in reconstructing the true system output compared to the reproducible identification algorithm in \cite{bako2011identification} and Algorithm \ref{algo:com} based on the identification of switching instants exhibits better prediction accuracy in output fitting with periodic switching instants. Moreover, Algorithm \ref{algo:com} with $\ell_1$-norm exhibits stronger robustness against noise than $\ell_0$-norm.
\begin{table}
    \centering
    \caption{Comparison of the proposed algorithms with periodic switching instants} \label{compare_fit}
    \begin{tabular}{@{}lccc@{}}
        \toprule
        Algorithm & Sparse & $\ell_0$-norm & $\ell_1$-norm \\ \midrule
        Fit  (Single run in Fig. \ref{fig:pre1})       & 52.14\% & 85.17\% & 91.16\% \\
        Average Fit (Monte Carlo) & 47.02\% & 90.67\% & 92.51\% \\ \bottomrule
    \end{tabular}
\end{table}
\setlength{\tabcolsep}{4pt}
\begin{table}
    \centering
    \caption{Comparison of the proposed identification algorithms over $100$ independent runs} \label{monte_para}
    \begin{tabular}{@{}lccc@{}}
        \toprule
        Method & $\hat{\theta}_1$ & $\hat{\theta}_2$ & $\hat{\theta}_3$ \\ \midrule
        $\ell_0$ norm & 
        $\begin{aligned}
            -0.321 \pm 0.086 \\  0.166 \pm 0.075 \\ -0.260 \pm 0.089 \\  0.210 \pm 0.117
        \end{aligned} $ & 
        $\begin{aligned}
            0.505 \pm 0.018 \\ -0.468 \pm 0.072 \\ -0.986 \pm 0.055 \\ 0.992 \pm 0.145 \\
        \end{aligned} $  &
        $\begin{aligned}
            0.638 \pm 0.114 \\ -0.082 \pm 0.048 \\ -0.624 \pm 0.034\\  0.156 \pm 0.065 
        \end{aligned} $  \\ \hline
        $\ell_1$ norm & 
        $\begin{aligned}
            -0.368 \pm 0.043 \\  0.221 \pm 0.004 \\  -0.155 \pm 0.021 \\  0.086 \pm 0.001 
        \end{aligned} $ & 
        $\begin{aligned}
            0.550 \pm 0.000\\ -0.579 \pm 0.000  \\ -1.096 \pm 0.000 \\  1.191 \pm 0.001
        \end{aligned} $  &
        $\begin{aligned}
            0.832 \pm 0.006 \\  -0.182  \pm 0.047 \\ -0.648  \pm 0.000 \\ 0.302  \pm 0.024\\
        \end{aligned}$  \\ \bottomrule
    \end{tabular}
\end{table}
\begin{figure} 
    \centering
    \includegraphics[width=\linewidth]{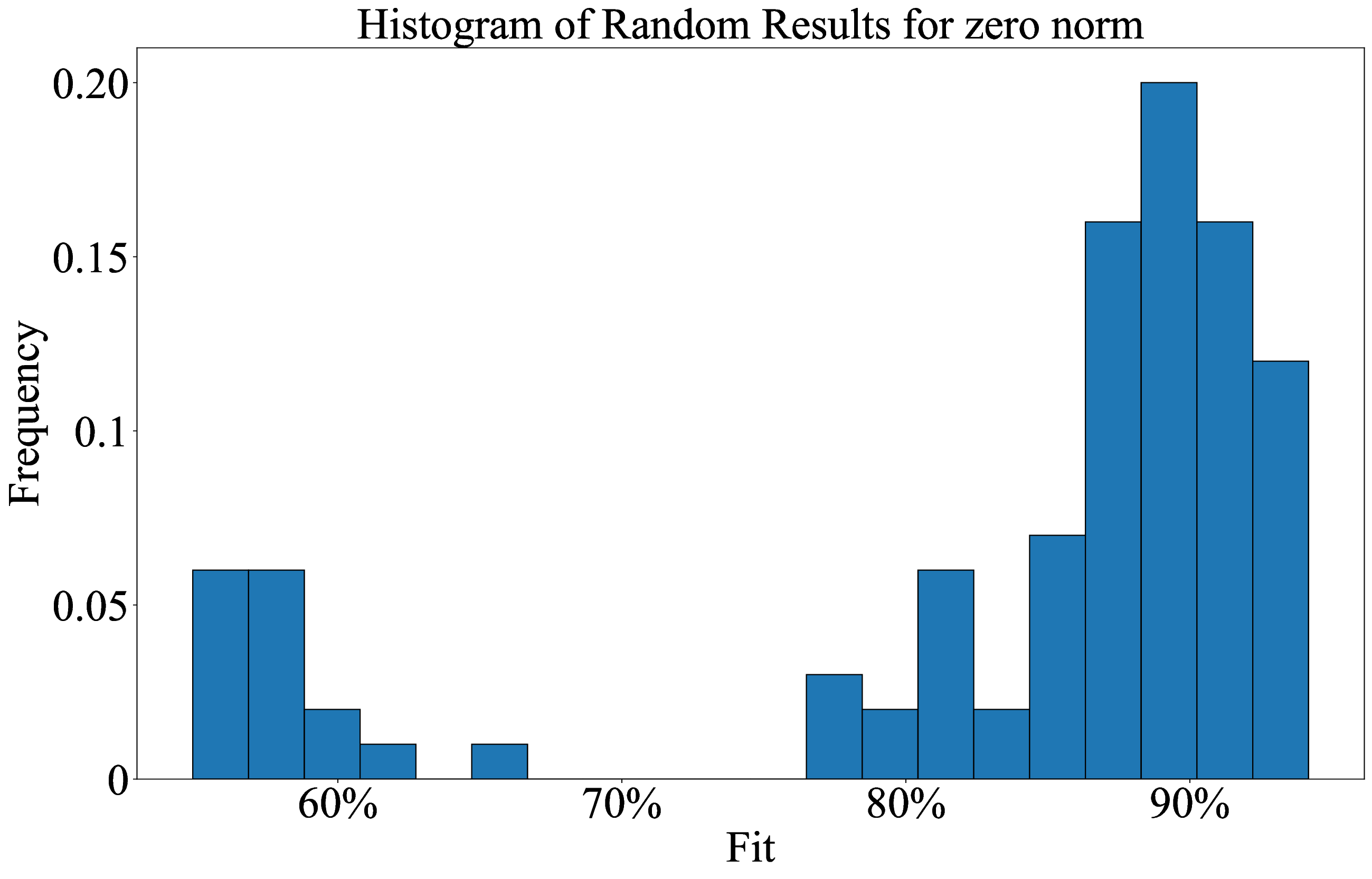} 
    \caption{The histogram of Fit for $100$ independent runs by Algorithm \ref{algo:com} with $\ell_0$-norm.} \label{fig:hist0}
\end{figure}
\begin{figure} 
    \centering
    \includegraphics[width=\linewidth]{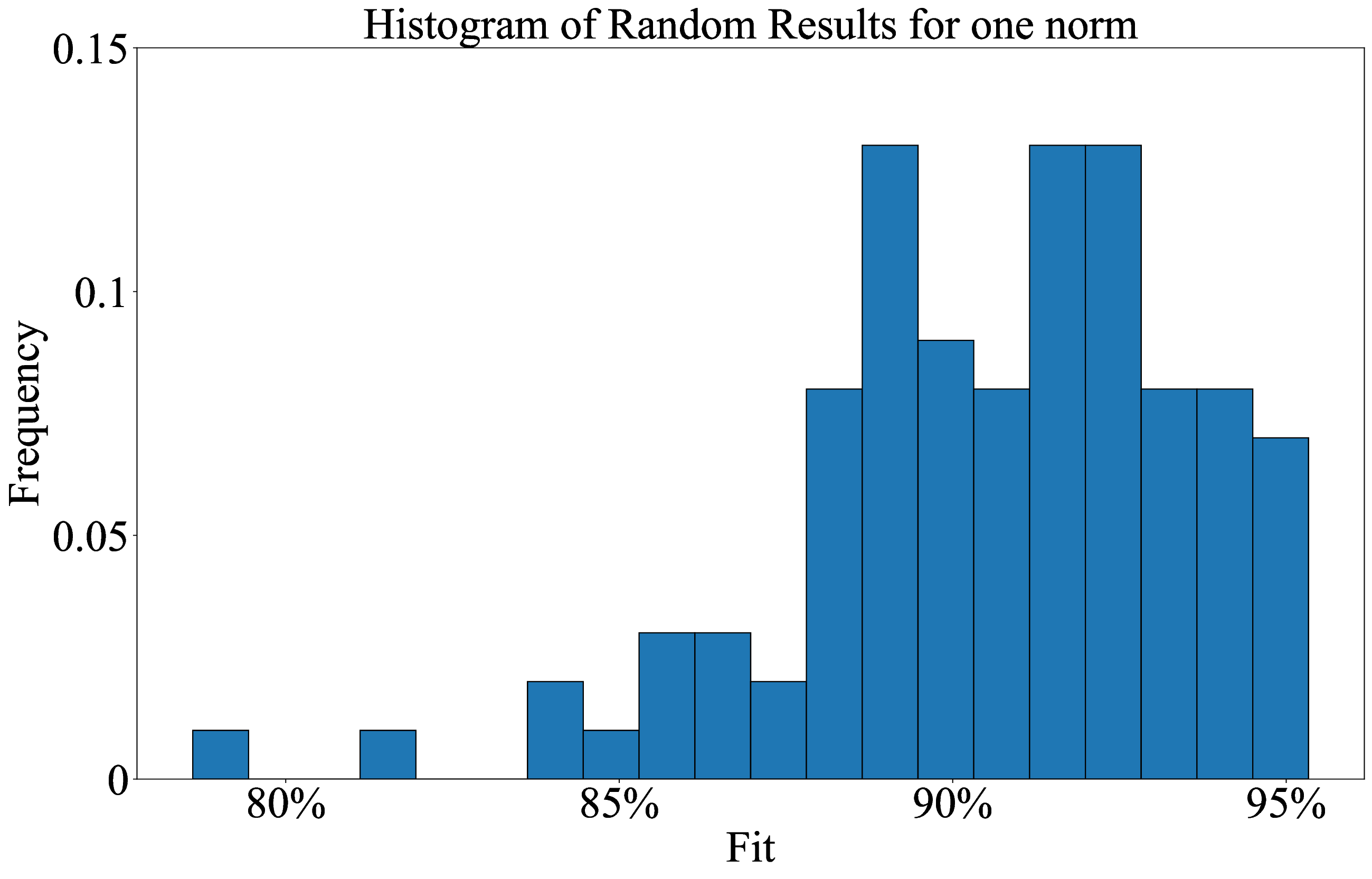} 
    \caption{The histogram of Fit for $100$ independent runs by Algorithm \ref{algo:com} with $\ell_1$-norm.} \label{fig:hist1}
\end{figure}






\subsection{Numerical Experiment with random switching instants}
In this section, experiments are conducted to test the identification of switching instants and its impact on the identification for the parameters of switched linear systems \eqref{s1}.
We consider a switched system \eqref{s1} with two subsystems:
\begin{equation}
    \begin{aligned}
        \theta_1 & =[-0.9,-0.2,0.16,0.2]^\top \\
        \theta_2 & = [-0.8,-0.1,0.26,0.15]^\top \\
    \end{aligned}
\end{equation}
where the number of the modes $S=2$, the mode $\lambda_k\in\{0,1\}$, the input $u_k\thicksim \mathcal{N}(0,1)$, noise $e_k\thicksim \mathcal{N}(0,\sigma^2)$ and the number of the time segments $M=15$. $14$ switching instants are randomly selected between $1$ and $1000$. The first $800$ samples are used to identify the switched linear system \eqref{s1} and the last $200$ samples are regarded as the test data.
In the first experiment, the output $y_k$ is noiseless, with $\sigma=0, SNR=\infty$, and in the second experiment, we set the variance of the noise $\sigma$ so that $SNR=10$. We also compute the number of the elements in the intersection of the kernel spaces corresponding to subsystem parameters $\theta_1, \theta_2$:
\begin{equation}
    \begin{aligned}
        & SNR=\infty:~|\{I(\theta_1)\cap I(\theta_2)\}|/N=51.7\%\\
        & SNR=10:~|\{I(\theta_1)\cap I(\theta_2)\}|/N=38.6\%.
    \end{aligned}
\end{equation}
The results of the identification for the switching instants based on both train data and test data are summarized in Table \ref{t1} and Table \ref{t2} while the results of the identified system parameters on test data are summarized in Table \ref{compare_fit2}. The prediction results are depicted in Fig. \ref{fig:infSNR} with $SNR=\infty$ and Fig. \ref{fig:10SNR} with $SNR=10$. The results of the above experiments indicate that our proposed identification algorithms for switching instants exhibit strong robustness and the robust identification of switching instants can assist in achieving higher accuracy in the identification of system parameters than the algorithm proposed in \cite{bako2011identification}.

\setlength{\tabcolsep}{5pt}
\begin{table}[h] 
    \centering 
    \caption{Identification for switching instants With $\text{SNR}=\infty$}\label{t1}
    \begin{tabular}{@{}lllcllllll@{}}
    \toprule
    Index &$m$ &2   & 3   & 4   & 5   & 6   & 7   & 8       \\ \midrule
    True &$s_m$ &34 & 57 & 237 & 295 & 451 & 605 & 636  \\
    Result &$\hat{s}_m$ &33 & 56 & 236 & 295 & 450 & 604 & 636   \\ \bottomrule
    Index &$m$ & 9 & 10  & 11  & 12  & 13  & 14 &15 &    \\ \midrule
    True &$s_m$ &715 &770 & 777 & 822 & 845 & 962 & 968 & \\
    Result &$\hat{s}_m$ &714 &769 & 777 & 821 & 844 & 961 & 968  & \\ \bottomrule
    \end{tabular}
\end{table}
\begin{table}[h] 
    \centering
    \caption{Identification for switching instants With $\text{SNR}=10$} \label{t2}
    \begin{tabular}{@{}lllcllllllllllllll@{}}
    \toprule
    Index &$m$ &2   & 3   & 4   & 5   & 6   & 7   & 8  \\ \midrule
    True &$s_m$ &39 & 59 & 239 & 269 & 439 & 579 & 599  \\
    Result &$\hat{s}_m$ &39 & 54 & 239 & 268 & 439 & 574 & 599  \\ \bottomrule
    Index &$m$ & 9 & 10  & 11  & 12  & 13  & 14 &15 & \\ \midrule
    True &$s_m$ &659 &729 & 749 & 779 & 809 & 939 & 969 \\
    Result &$\hat{s}_m$ &668 &728 & 751 & 778 & 806 & 935 & 969 \\ \bottomrule
    \end{tabular}
\end{table}
\begin{table}
    \centering
    \caption{Comparison of the proposed algorithms with random switching instants} \label{compare_fit2}
    \begin{tabular}{@{}lccc@{}}
        \toprule
        Algorithm & Sparse & Zero norm & One norm \\ \midrule
        Fit ($ SNR=\infty$) & 79.79.14\% & 96.03\% & 98.02\% \\
        Fit ($SNR = 10$)    & 31.22\% & 39.34\% & 79.01\% \\ \bottomrule
    \end{tabular}
\end{table}
\begin{figure} 
    \centering
    \includegraphics[width=\linewidth]{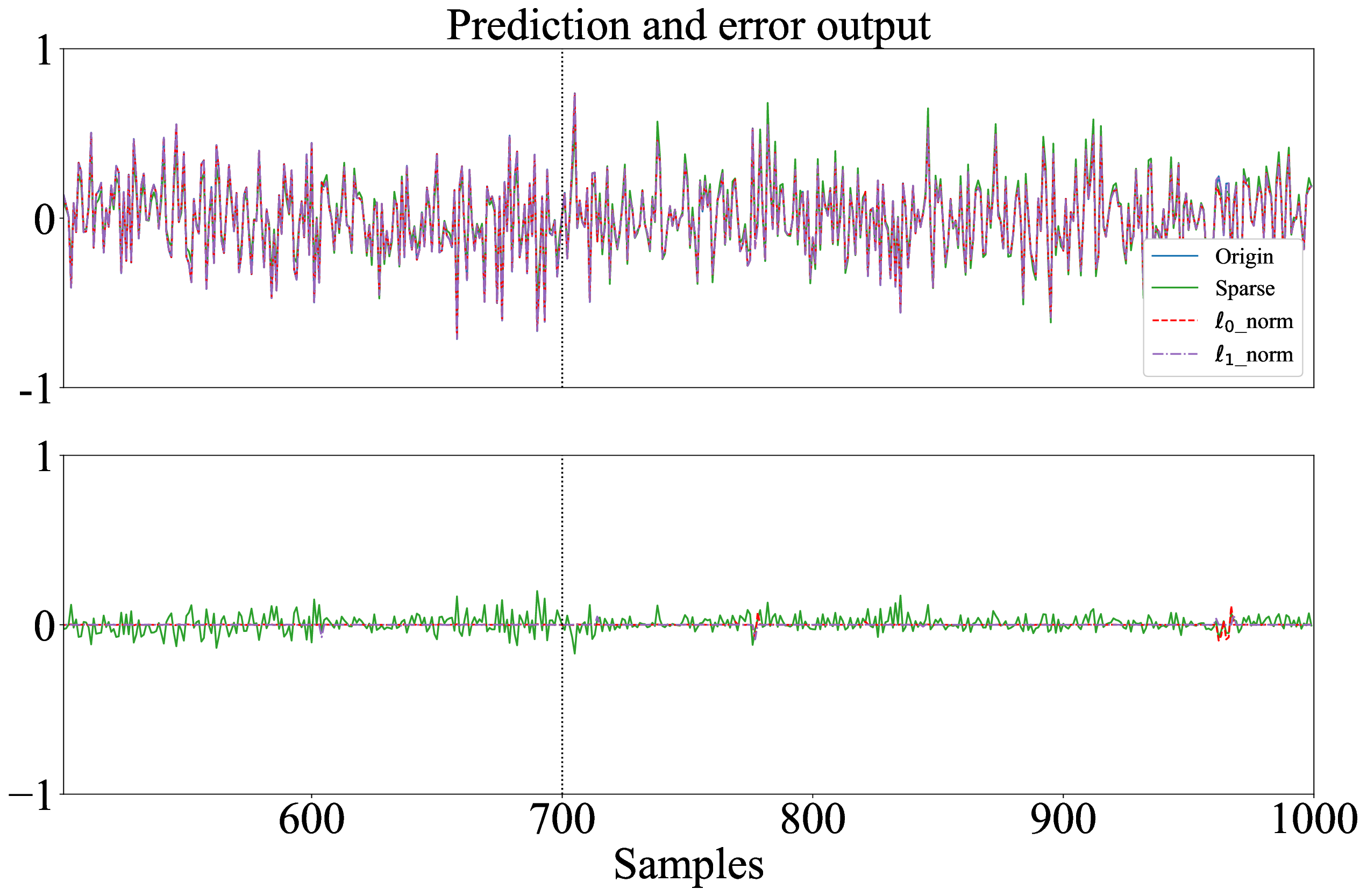} 
    \caption{500 identification and 200 test data with $SNR=\infty $.} \label{fig:infSNR}
\end{figure}
\begin{figure}
    \centering
    \includegraphics[width=\linewidth]{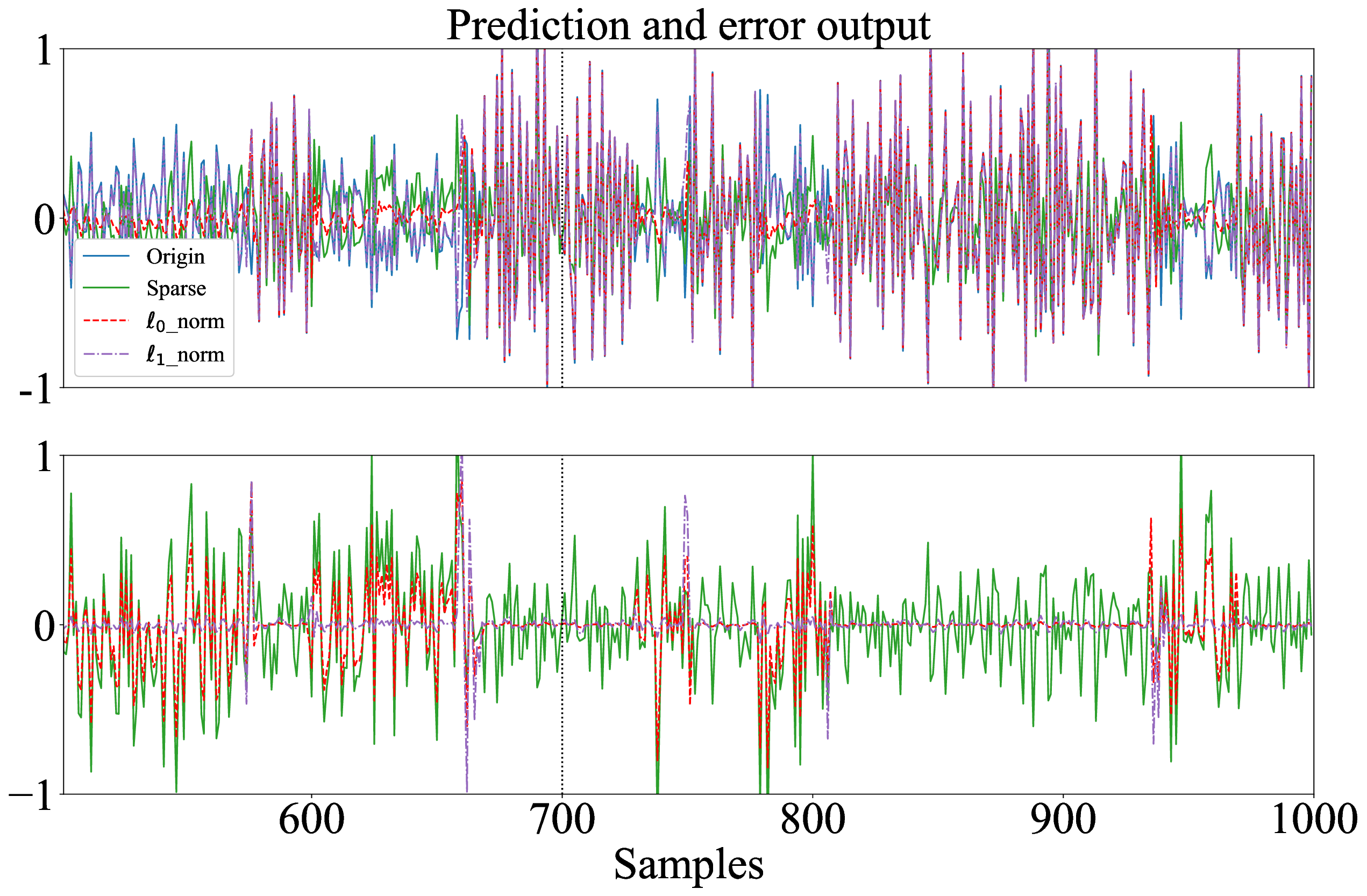} 
    \caption{500 identification and 200 test data with $SNR=10 $.}  \label{fig:10SNR}
\end{figure}

\subsection{Case study on high speed train modeling}

In this section, we adopt data collected from field trials on CRH3 high speed train \cite{sun2010optimization} in China to investigate the performance of our proposed Algorithm \ref{algo:com} as compared to the linear system by the least squares identification without data segmentation. 

High speed trains exhibit different modalities under various states and inputs, which can be modeled as a switched linear system. The output vector $y_k$ and the regressor vector $x_k$ in the switched linear system \eqref{s1} are defined as 
\begin{equation}
    \begin{aligned}
        y_k &= [v_k,s_k]^\top \\
        x_k &= [y_k^\top,\cdots,y_{k-5}^\top,\sigma_k,\cdots,\sigma_{k-5}]
    \end{aligned}
\end{equation}
where $v_k$, $s_k$ and $\sigma_k$ represents the velocity, position and the motor power output variable of the train. 
\begin{figure}
    \centering
    \includegraphics[width=\linewidth]{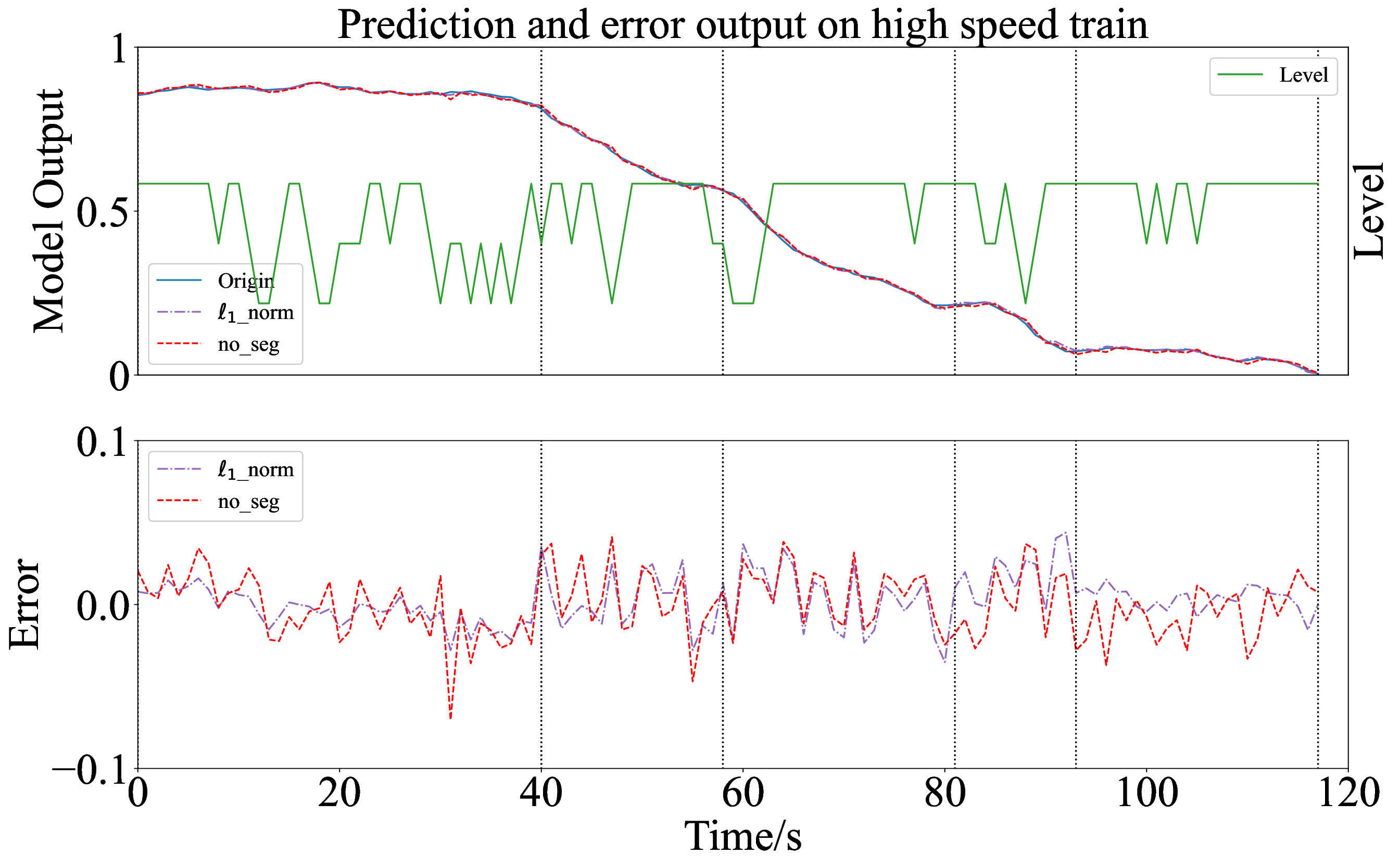} 
    \caption{High speed train modeling.}  \label{fig:hst}
\end{figure}
The results of model prediction and error output are depicted in Fig. \ref{fig:hst}.
It is evident that Algorithm \ref{algo:com} based on the switched linear system with data segmentation and sparsity inducing methods has much smaller fitting error than the traditional linear model. This indicates that the segmentation of time series data and the identification of switching rules can significantly improve the performance of high speed train modeling.

\section{Conclusion}

In this paper, we introduce a new two-stage identification algorithm for switched linear systems with constrained switching mechanisms. Initially, the switching instants are determined using the iterative dynamic programming method. Subsequently, system parameters are identified based on data segmentation between these switching instants using sparse inducing methods.
We formulate a specific combinatorial $\ell_0$-norm optimization problem, which can be relaxed into an $\ell_1$-norm optimization problem for ease of solution, to decouple the identification of discrete mode states and system parameters. Furthermore, we discuss the unbiasedness of our algorithm's identification in detail.
Additionally, we propose a new persistence of excitation criterion for switched linear systems based on the constrained switching mechanism. Experimental results demonstrate robustness in identifying switching instants, leading to improved prediction accuracy of the identified switched linear system.

\begin{ack}                               
    This work is supported in part by the National Natural Science Foundation of China under Grant 61933015 and Grant 61803163, in part by the Beijing National Research Center for Information Science and Technology (BNRist) Program under Grant BNR2019TD01009, and in part by the National Innovation Center of High Speed Train R\&D project ‘‘Modeling and comprehensive intelligent optimization for new high-efficiency urban rail transit system’’ (Grant No. CX/KJ-2020-0006).
\end{ack}

\bibliographystyle{unsrt} 
\bibliography{switchreview}

\appendix
\section{Improved OMP} \label{append_1}
The orthogonal matching pursuit (OMP) \cite{tropp2007signal,wang2012generalized} is proposed to address the normalized $\ell_0$-norm optimization problem \eqref{P_J} with $J(z)=\lVert z\rVert_0$. 
Although OMP is inherently greedy, the convergence and optimality of the solution to the general $\ell_0$-norm optimization problem against noise is guaranteed when condition \eqref{condition_mu} holds, as discussed in \cite{donoho2005stable}.
To compare with the proposed $\ell_1$-norm based algorithm, the OMP algorithm is improved to address the combinatorial $\ell_0$-norm optimization problem \eqref{prob_zero} with segmented data $\bar{D}=\{\bar{D}_1,\cdots,\bar{D}_M\}$. 
Notably, we focus on refining the "Sweep" step and the "Update Support" step, as the selection of basic elements now relies on segmented data rather than a single data sample. The improved OMP algorithm is outlined in the following Algorithm \ref{algo:zero}.

\begin{algorithm}[ht]
    \caption{Improved OMP} \label{algo:zero}
    \hspace*{0.02in} {\bf Require:} \\ 
    \hspace*{0.02in} $A_X\in \mathbb{R}^{(N-n)\times N},~b_X\in \mathbb{R}^{N-n}$, switching instants $\{ s_m \}_{m=1}^{M+1}$, segmented data $\bar{D}=\{\bar{D}_1,\cdots,\bar{D}_M\}$ and hyperparameters $\varepsilon_0$.
    \begin{algorithmic}[1]
    \State According to the switching instants $\{ s_m \}_{m=1}^{M+1}$ in column order, partition matrix $A_X$ into block matrices, $A_X = [A^{1}_X,\cdots,A^{M}_X]$, where $A^{m}_X \in \mathbb{R}^{(N-n)\times p_m}$.
    \State Initial settings: $j=0$
    \begin{itemize}
        \item Initial solution: $z^0=0$
        \item Initial residual: $r^0=b_X-A_X z^0$
        \item Initial solution support: $S^0=\{\emptyset\}$
    \end{itemize}
    \State While $\|r^j\|_2 \le \varepsilon_0$
    \begin{itemize}
        \item Sweep: Computing errors $\epsilon(m)$ with the optimal choice $q_m^*$
        \begin{equation}
            \begin{aligned}
                \epsilon(m) & = \min_{q_m\in \mathbb{R}^{p_m}} \|A_m q_m - r^{j-1} \|_2^2 \\
                q_m^* & = \arg\min_{q_m\in \mathbb{R}^{p_m}} \|A_m q_m - r^{j-1} \|_2^2 .
            \end{aligned}
        \end{equation}
        \item Update Support: Find a minimizer $m_0$ of $\epsilon(m)$ such that $\epsilon(m_0) \leq \epsilon(m), \forall m\notin S^{j-1}$ and update $S^j = S^{j-1}\bigcup\{s_{m_0},\cdots,s_{m_0+1}-1\}$
        \item Update provisional solution: Compute $z^j$ the minimizer of $\|A_Xz-b_X\|_2^2$ subject to Support$\{z\}=S^j$
        \item Update residual: Compute $r^j=b_X-A_Xz^j$
    \end{itemize}
    \State Concatenate the data belonging to the $i_0$th subsystem, $I(\theta_{i_0})=\{x_k:k\in S^j\}$, and re-estimate the parameters of the $i_0$th subsystem as follows:
    \begin{equation} \label{thetar}
        \begin{aligned}
            \theta_{i_0} = \arg\min_{\theta} \sum_{k\in S^j} \|z_k-x_k^\top\theta\|_2^2.
        \end{aligned}
    \end{equation}
    \State \Return $\theta_{i_0}, I(\theta_{i_0})$ 
    \end{algorithmic}
\end{algorithm}

\end{document}